\let\emptyset\varnothing
\definecolor{ncsuaqua}{RGB}{0, 132, 115} 
\definecolor{ncsublue}{RGB}{65, 86, 161}
\definecolor{ncsured}{RGB}{206, 0, 0}
\long\def\comment#1{}
\newcommand{\be}{\begin{equation}}
\newcommand{\ee}{\end{equation}}
 \newtheorem{prop}{Proposition}
\newcommand{\eqdef}{\stackrel{\Delta}{=}}
\newcommand{\av}{{\bf a}}
\newcommand{\dv}{{\bf d}}
\newcommand{\ev}{{\bf e}}
\newcommand{\nv}{{\bf n}}
\newcommand{\uv}{{\bf u}}
\newcommand{\vvv}{{\bf v}}
\newcommand{\xv}{{\bf x}}
\newcommand{\yv}{{\bf y}}
\newcommand{\Upsilonm}{{\bf \Upsilon}}
\newcommand{\Am}{{\bf A}}
\newcommand{\Bm}{{\bf B}}
\newcommand{\Cm}{{\bf C}}
\newcommand{\Km}{{\bf K}}
\newcommand{\Lm}{{\bf L}}
\newcommand{\Mm}{{\bf M}}
\newcommand{\Pm}{{\bf P}}
\newcommand{\Qm}{{\bf Q}}
\newcommand{\Sm}{{\bf S}}
\newcommand{\Wm}{{\bf W}}
\newcommand{\Ac}{{\cal A}}
\newcommand{\Hc}{{\cal H}}
\newcommand{\Kc}{{\cal K}}
\newcommand{\Nc}{{\cal N}}
\newcommand{\Sc}{{\cal S}}
\newcommand{\RNum}[1]{\uppercase\expandafter{\romannumeral #1\relax}}
\newcommand{\epsilonv}{\hbox{\boldmath$\epsilon$}}
\newcommand{\muv}{\hbox{\boldmath$\mu$}}
\newcommand{\xiv}{\hbox{\boldmath$\xi$}}
\newcommand{\Deltam}{\hbox{\boldmath$\Delta$}}
\newcommand{\Sigmam}{\hbox{\boldmath$\Sigma$}}
\DeclareMathOperator*{\argmin}{arg\,min}
\title{Process Resilience under Optimal Data Injection Attacks}
\author[1]{Xiuzhen Ye}
\author[1]{Wentao Tang}
\affil[1]{Department of Chemical and Biomolecular Engineering, North Carolina State University}
\runningauthor{Ye \& Tang}
\begin{document}
\maketitle
\begin{abstract}
{In this paper, we study the resilience of process systems in an {\it information-theoretic framework}, from the perspective of an attacker capable of optimally constructing data injection attacks.} The attack aims to distract the stationary distributions of process variables and stay stealthy, simultaneously. The problem is formulated as designing a multivariate Gaussian distribution to maximize the Kullback-Leibler divergence between the stationary distributions of states and state estimates under attacks and without attacks, while minimizing that between the distributions of sensor measurements. 
When the attacker has limited access to sensors, sparse attacks are proposed by incorporating a sparsity constraint. {We conduct theoretical analysis on the convexity of the attack construction problem and present a greedy algorithm, which enables systematic assessment of measurement vulnerability, thereby offering insights into the inherent resilience of process systems.
We numerically evaluate the performance of proposed constructions on a two-reactor process. }
\keywords{Data injection attacks, {process resilience}, information-theoretic measures, cybersecurity, vulnerability}
\end{abstract}

\section{Introduction} 
In modern chemical processes, computational and physical components are tightly integrated, which turns traditional chemical processes into {\it{cyber-physical systems} }(CPSs)~\cite{Alur_PrinciplesCPS_2015}. The integration paves the way for new technologies from artificial intelligence, and more broadly, data-driven methods. A representative example of such tightly integrated systems is the supervisory control and data acquisition (SCADA) system that monitors, controls and manages critical infrastructure~\cite{HR_IoT_17}. However, these advances and potentials are harnessed at the expense of emerging challenges~\cite{RH_ICT_11}. {Therefore, as new techniques are developed for tightly integrated systems, ensuring the resilience of the processes has become a critical concern. Investigations into process resilience require a fundamental understanding of the threats these systems are exposed to, especially {\it cyberattacks} due to their impact on data acquisition and transfer across control networks, communication channels, and embedded computations.} Such malicious attacks disrupt, manipulate, and exploit the interactions between physical processes and computational control systems. As chemical processes are highly susceptible to cyberattacks from various entry points~\cite{YG_IoT_2023}, cyberattacks in chemical processes, such as data injection attacks (DIAs), replay attacks, denial of service (DoS), Man-in-the-Middle (MitM), and others, have been investigated in both research and public sectors~\cite{YM_CCC_2009, DZ_JAS_2022}. In the context of chemical processes as nonlinear systems~\cite{Helen_Mathematics_2018, KeHelen_CERD_2021}, typically controlled with multivariate MPC~\cite{WuHelen_Mathematics_2018}, issues such as attack detection~\cite{OyamaHelen_AICHE_2020} and the impact of cyberattacks on specific applications have been studied in a few recent works~\cite{HelenMatthew_Mathematics_2020, Huang_TIT_2018}. 
 
{A major security concern in process resilience that needs to be addressed is DIAs~\cite{LY_TISSEC_11}, which, however, have received limited attention in the literature.}
DIAs are such attacks that affect feedback control and alter the system behavior by compromising the measurements in acquisition and communication without triggering detection mechanisms~\cite{DH_TSMC_2020}. More directly, DIAs can also introduce disturbances into the states of the process, which drastically impacts the efficiency, safety, and product quality. This kind of attacks does not entail physical damages to the system, which is much less costly for the attacker. Consequently, it has become one of the most major cyberattacks to CPSs~\cite{WS_TC_2018}.
Chemical processes are sensitive to DIAs, as the dynamics rely heavily on data availability for control systems, where feedback control plays a critical role in exploiting the real-time data and optimizing the system behavior. Data-driven control methods, which have received increasing attention in recent literature~\cite{WT_ACC_2022}, further intensifies the reliance on data integrity and urges systematic studies on DIAs on such systems.

The research on cyberattacks in control frameworks mainly focuses on the analysis, detection, and resilience to failures. {A comprehensive review of cybersecurity challenges and potential solutions in process control, operations, and supply chain in process engineering can be found in Parker~\cite{PW_CCE_23}, where cyber-attack detection and resilient control are discussed.}
The first work in the scope of control theory for cyberattacks was by Mo \& Sinopoli~\cite{YM_SCS_2010}, where the authors analyzed the effects of such attacks on control systems and provided a necessary and sufficient condition under which the attacks could destabilize the system, while successfully bypassing a large set of possible failure detectors. 
In Pasqualetti et al.~\cite{FabioP_CDC_2011}, notions of detectability and identifiability of an attack in control systems are introduced, as well as the corresponding dynamical detection and identification procedures based on tools from geometric control theory. In an extended work of Pasqualetti et al.~\cite{FabioP_TAC_2013}, the centralized and distributed attack detection and identification monitors are proposed. In these works, the dynamics of the system is described as a linear state-space model. Based on the same formula, DIAs to the load of a controlled energy system against stability are formulated and analyzed~\cite{FabioP_TSG_2018}.  
Furthermore, controllability and observability in control theory can be used for the investigation of security in the systems, and they ultimately lead to security-aware system design criteria. 
For chemical processes, the work by Durand~\cite{Helen_Mathematics_2018} developed a framework to guide the process design to be cyberattack resilient. In the work by  Rangan et al.~\cite{KeHelen_CERD_2021}, cyberattack detection strategies for nonlinear systems under particular control strategy are discussed with a designed detection criteria.
The scope of the framework is restricted to specific control laws, MPC, and residual-based detection methods~\cite{ShilpaEllis_ACC_2022, ShilpaEllis_JPC_2022}. 
{Seeking a more generic approach, neural network based detection methods were proposed by Wu et al.~\cite{WuHelen_Mathematics_2018}~\cite{Huang_CSR_2020}, which appears to, however, require a significant amount of labeled training data and lack generalizability to novel attacks. A recent study by Wu et al.\cite{WZ_IECR_25} addressed this limitation by incorporating information from process networks and prior knowledge of the patterns/characteristics of specific cyberattacks. Similarly, the work by Wu et al.~\cite{WW_CERD_24} proposed physics-informed machine learning detection methods, which are subject to real-time deployment.}

Apart from DIAs targeting at sensors, cyberattacks on actuators manipulate the control inputs transmitted across control channels, which yield {\it{input attacks}}.  
In the work of Bai et al.~\cite{FabioP_ACC_2015}, the proposed $\epsilon$-stealthy attacks tamper with the control inputs in stochastic control systems to maximize the estimation error of the Kalman filter, which gives an achievable bound and a closed-form expression of the attacks.
The concept of $\epsilon$-stealthiness means that no detector, with a certain bounded probability of detection, can obtain a probability of false alarm that exponentially converges to zero at a rate greater than $\epsilon$.  
The work was extended later to quantify the estimation error and characterize the limitations of the stealthiness of an input attack~\cite{FabioP_Automatica_2017}.
The attack strategies rely heavily on the assumption of right-invertible systems, and for systems that are not right-invertible, only suboptimal strategies and looser bounds are provided.

We remark that the above-mentioned works are on stochastic systems.
In this setting, {\it information theory} offers quantitative measures of the information in the data~\cite{TM_ElementsofIT}, and thus excels in establishing an universal framework to fundamentally characterize the effects of DIAs~\cite{YE_TSG_21}.
Studies on information-theoretic DIAs were presented in the previous works by the first author~\cite{YE_TSG_21, YE_SGC_20} on smart grids, where the attack detection is formulated as the likelihood ratio test~\cite{JN_LRT_33} or alternatively machine learning techniques~\cite{OM_TNNLS_16}. Apart from attack constructions, information-theoretic measures also serve as fundamental metrics in assessing sensor vulnerabilities~\cite{YE_IETSG_22}. 

{This paper studies process resilience under optimal data injection attacks (DIAs) within an information-theoretic framework. The attacks compromise sensor measurements with the aim of perturbing the distributions of system states and their estimates obtained via a state observer, while remaining stealthy.}
In addition, we aim to characterize the fundamental limits of the threats and the vulnerabilities of the sensor measurements. Here, considering the stationary distributions of process variables, the DIAs are cast in an information-theoretic framework~\cite{TM_ElementsofIT}, where the aims of the attacks are two-folded: (1) the maximum deviation of the stationary distribution under attacks from the stationary distribution without attacks, and (2) the minimum probability of attack detection.  
With such aims, the attack construction is formulated in a general setting, followed with a sparsity constraint on the numbers of sensor measurements under attacks, yielding $k$-{\it sparse attacks}. 

The main contributions of this paper follow:
\begin{enumerate}
	\item An information-theoretic framework for DIAs is proposed on control systems. The cost of the attacks is firstly formulated in terms of the information-theoretic description of the disruption and detection as a two objective optimization problem. Specifically, the Kullback-Leibler (KL) divergence between the stationary distributions with attacks and without attacks captures the attack disruption, and the KL divergence between the normal-operation sensor measurements and the attacked measurements captures the attack detection. A weighting parameter is adopted for the trade-off between the attack disruption and detection, allowing the attacker to construct attacks with customized affordable probability of detection.
	\item In a general attack setting, we first consider the case where the attacker has access to all the measurements in the system, which yields {\it full attacks}. The corresponding convexity analysis are carried out and an explicit solution is characterized. 
	\item Then, a sparsity constraint is considered where the attacker has limited access to the measurements, which yields $k$-{\it sparse attacks}. We tackle such a combinatorial problem by incorporating one-at-a-time additional measurement that yields a sequential
	sensor selection problem. In the single attack case where the attacker can only attack one measurement, the strategy is to characterize the optimal attack for all the measurements individually and to identify the one with the best performance from the attacker's standpoint. The convexity of the resulting optimization problem is analyzed, and the strategy obtained from optimal single measurement attack is distilled to propose a heuristic greedy algorithm for $k$-{\it sparse attacks}.
	\item Based on sparse attack constructions in control systems, we propose to characterize the vulnerability of process measurements by the Pareto curve of the KL divergence on states and state estimates and the KL divergence on sensor measurements. 
	The Pareto curve assesses the achievable attack disruption and attack detection for each measurement, as well as for each unit in a chemical process by allowing attacks on all the measurements in the unit. Illustrative examples in a two-reactor system are presented numerically.
\end{enumerate}
 
The rest of the paper is organized as follows. In the next section, we introduce DIAs on control systems with linearized dynamics, followed with a section where information-theoretic attacks are formulated. Full attack constructions and $k$-sparse attacks are introduced in the sections after the attack formulation section, respectively. 
We evaluate the performance of the proposed attack constructions numerically in a separate section and close the paper with conclusions in the last section.

\textbf{Notation:} 
The set of positive real numbers is denoted by $\mathbb{R}_+$. The set of $n$-dimensional vectors is denoted by $\mathbb{R}^n$. 
The elementary vector $\ev_i\in\mathbb{R}^n$ is a vector of zeros except for a one in the $i$-th entry. 
%
A real matrix $\Am$ with dimension $m \times n$ is denoted by $\Am \in \mathbb{R}^{m \times n}$.
The set of positive semidefinite matrices of size $n$ is denoted by $\Sc_{+}^n $. The cardinality of a set $\Ac$ is denoted by $|\Ac|_0$. We also denote positive semidefinite matrix $\Am$ as $\Am \succeq \textnormal{\textbf{0}}$ and positive definite matrix $\Am$ as $\Am \succ \textnormal{\textbf{0}}$. The $n$-dimensional identity matrix is denoted as $\textbf{I}_n$.
The determinant of a square matrix $\Am$ is denoted by $|\Am|$, the trace of the matrix is denoted by $\textnormal{tr}(\Am)$ and the $i$-th eigenvalue of the matrix is denoted by $\lambda_i(\Am)$.
Given a vector $\muv \in \mathbb{R}^n$ and a matrix $\Sigmam \in \Sc_{+}^n$, we denote by $\mathcal{N} (\muv, \Sigmam)$ the multivariate Gaussian distribution with mean $\muv$ and covariance matrix $\Sigmam$.
The Kullback-Leibler (KL) divergence from distribution $P$ to $Q$ is denoted by $D(P\| Q)$.
We denote the Kronecker product of $\xv$ and $\yv$ as $\xv \otimes \yv$ and the vectorization of a matrix $\Am$ as $\textnormal{vec}(\Am)$.

\section{System Model}\label{sec_system_model}
\subsection{Dynamical Model and Attack}
In this paper, we consider a discrete-time linear system with noise. A more general framework for nonlinear systems is more challenging, as we shall see, due to the involvement of stationary distributions that are usually non-Gaussian. Although chemical processes are in principle nonlinear, we restrict to linear systems as approximations near steady states in the current paper.  
The system is represented as: 
\begin{IEEEeqnarray}{ll}\label{obs_model}
\begin{aligned} 
\xv(t+1) = &\Am \xv(t) + \Bm \uv(t) + \vvv_d(t), \\
\yv(t+1) = &\Cm \xv(t)+\vvv_n(t),
\end{aligned}
\end{IEEEeqnarray}
where the vector $\xv(t) \in \mathbb{R}^n$ is the states of the control system at time $t$; the vector $\uv(t) \in \mathbb{R}^m$ is the control inputs at time $t$; the vector of output measurements at time $t$ is denoted by $\yv(t) \in \mathbb{R}^m$; the matrices $\Am \in \mathbb{R}^{n \times n}$, $\Bm \in \mathbb{R}^{n \times m}$ and $\Cm \in \mathbb{R}^{m \times n}$ are the transition matrix, input matrix and measurement matrix, respectively. In this paper, we consider the system to be observable and controllable, that is, both the observability matrix 
${\cal O} = \begin{bmatrix}
   \Cm \\
   \Cm\Am \\
   \vdots \\
   \Cm \Am^{n-1}
   \end{bmatrix}$ 
and controllability matrix $ {\cal C} = \left[ \Bm, \Am\Bm, \Am^2 \Bm, ..., \Am^{n-1} \Bm \right] $
are of rank $n$.
The output measurements $\yv$ are corrupted by additive white Gaussian noise introduced in sensing. Such a noise is modelled by the vector $\vvv_n \in \mathbb{R}^m$ in~\eqref{obs_model}, such that $\vvv_n \sim {\cal N} \left(\textnormal{\textbf{0}}, \Sigmam_{\nv \nv } \right)$.
and the disturbance to the state of the system is modelled as
$\vvv_d \sim {\cal N} \left(\textnormal{\textbf{0}}, \Sigmam_{\dv \dv} \right)$,
where $\textnormal{\textbf{0}}$ is a zero mean vector; $\Sigmam_{\nv \nv}$ and $ \Sigmam_{\dv \dv}$ are the covariance matrices of the system noise and disturbance, respectively, and it holds that $\Sigmam_{\nv \nv} \in \Sc_+^m$ and $ \Sigmam_{\dv \dv} \in \Sc_+^m$.

Consider a state observer $\hat{\xv} (t+1) = \Am \hat{\xv} (t) + \Bm \uv(t) + \Lm (\yv(t) -\Cm  \hat{\xv}(t) )$, specified by the observer gain $\Lm \in \mathbb{R}^{n \times m}$, where $\hat{\xv} (t)$ is the state estimates at time $t$ and a feedback control law $\uv(t) = \Km \hat{\xv}(t)$ is applied. From~\eqref{obs_model}, the states and the estimates satisfy
\begin{IEEEeqnarray}{ll}\label{estimator}
\begin{aligned} 
 \xv(t+1) = & \Am \xv(t) +  \Bm \Km \hat{\xv} (t) + \vvv_d(t),\\
\hat{\xv} (t+1) = & \Lm \Cm  \xv (t) + (\Am - \Lm \Cm + \Bm \Km) \hat{\xv} (t) + \Lm \vvv_n(t).
\end{aligned}
\end{IEEEeqnarray} 

{In DIAs, the attacker compromises the sensor measurement $\yv$ in~\eqref{obs_model} by a malicious attack vector $\av \in \mathbb{R}^m$ such that $\av$ has a distribution $ P_{\av}$~\cite{LY_TISSEC_11}. The injected attack vector $\av$ results in an additional term in the state observer and leads to derivations in both the dynamics of states and state estimates in~\eqref{estimator}.    }
In the following, $P_{\av}$ is assumed to be a multivariate Gaussian distribution, i.e.,
$\av \sim {\cal N}(\muv_a, \Sigmam_{\av \av})$,
where $\muv_a \in \mathbb{R}^m$ and $\Sigmam_{\av \av} \in \Sc_+^m$ are the mean vector and the covariance matrix of the attacks. Consequently, the resulting measurements denoted by $\yv_a \in \mathbb{R}^m$ are
\be\label{obs_measurements}
\yv_a = \Cm \xv +\vvv_n + \av.
\ee
The stationary distribution of the system is a multivariate Gaussian one with a zero mean vector and a covariance matrix $\Sigmam_{\xv \xv}$:
$\xv \sim {\cal N}(\textnormal{\textbf{0}}, \Sigmam_{\xv \xv})$,
where $\Sigmam_{\xv \xv} \in \Sc_+^n$. Hence, the vector of measurements $\yv$ in~\eqref{obs_model} follows a multivariate Gaussian distribution with a zero mean vector and covariance matrix $\Sigmam_{\yv\yv}$ such that
\be\label{dis_y}
\yv \sim {\cal N}(\textnormal{\textbf{0}}, \Sigmam_{\yv \yv}),\ \textnormal{with} \ \Sigmam_{\yv \yv} \eqdef \Cm \Sigmam_{\xv \xv} \Cm^{\sf T} + \Sigmam_{\nv \nv },
\ee
whereas, from~\eqref{obs_measurements}, the vector of compromised measurements is 
\be\label{dis_ya}
\yv_a \sim {\cal N}(\muv_a, \Sigmam_{\yv_a \yv_a}), \ \textnormal{with} \ \Sigmam_{\yv_a \yv_a} \eqdef \Cm \Sigmam_{\xv \xv} \Cm^{\sf T} + \Sigmam_{\nv \nv } + \Sigmam_{\av \av}.
\ee 
{ The additive DIAs on the measurements $\yv$ leads to an additive covariance matrix $\Sigmam_{\av \av}$ to the covariance matrix of the measurements without attacks, that is, $\Sigmam_{\yv_a \yv_a} = \Sigmam_{\yv \yv } + \Sigmam_{\av \av}$. 
	
Next we analyze how the attack vector $\av$ misleads the state estimator and causes the derivations in both the dynamics of states and state estimates in~\eqref{estimator}.} We denote the states under attacks by $\xv_a \in \mathbb{R}^n$ and the corresponding estimates $\hat{\xv}_a \in \mathbb{R}^n$. Let $ \xiv_a \eqdef \begin{bmatrix} 
 {\xv}_a \\ 
\hat{ {\xv}}_a  
\end{bmatrix} $ be their joint vector.
From~\eqref{obs_model},~\eqref{estimator}, and~\eqref{obs_measurements}, the dynamics of the states and the corresponding estimates under attacks are as follows:
\begin{gather}\label{20250107_1}
\xiv_a(t+1)
= \begin{bmatrix}
   \Am  & \Bm \Km\\
   \Lm \Cm & \Am - \Lm \Cm + \Bm \Km
   \end{bmatrix}
\xiv_a(t) +
\begin{bmatrix} 
\vvv_d(t) \\ 
\Lm (\vvv_n(t) +  \av(t) )
\end{bmatrix}.
\end{gather}
We denote the covariance matrix of $\xiv_a$ at the stationary distribution by $\Sigmam_{\xiv_a  \xiv_a  }\in \mathbb{R}^{2n}$. Then, it holds that
\begin{IEEEeqnarray}{ll}\label{Sigma_xiaxia}
\begin{aligned} 
\Sigmam_{\xiv_a  \xiv_a  } =\begin{bmatrix}
   \Am  &  \Bm \Km\\
   \Lm \Cm  & \Am - \Lm \Cm + \Bm \Km
   \end{bmatrix} \Sigmam_{\xiv_a \xiv_a  } \begin{bmatrix}
   \Am  &  \Bm \Km\\
   \Lm \Cm  & \Am - \Lm \Cm + \Bm \Km
   \end{bmatrix}^{\sf T} + \begin{bmatrix} 
\Sigmam_{\dv \dv} & \textnormal{\textbf{0}} \\ 
\textnormal{\textbf{0}} & \Lm ( \Sigmam_{\nv \nv }+\Sigmam_{\av \av})  \Lm^{\sf T}  
\end{bmatrix},
\end{aligned}
\end{IEEEeqnarray}
where $\textnormal{\textbf{0}} $ here denotes the zero matrix with an appropriate dimension. Let $ \xiv \eqdef \begin{bmatrix} 
 {\xv} \\ 
\hat{ {\xv}} 
\end{bmatrix} $ be the vector of the states and state estimates without attacks, and denote its stationary covariance matrix by $\Sigmam_{\xiv \xiv}$. Thus, it follows that
\begin{IEEEeqnarray}{ll}\label{Sigma_xixi} 
\begin{aligned} 
\Sigmam_{\xiv   \xiv } =\begin{bmatrix}
   \Am  &  \Bm \Km\\
   \Lm \Cm  & \Am - \Lm \Cm + \Bm \Km
   \end{bmatrix} \Sigmam_{\xiv \xiv  } \begin{bmatrix}
   \Am  &  \Bm \Km\\
   \Lm \Cm  & \Am - \Lm \Cm + \Bm \Km
   \end{bmatrix}^{\sf T} + \begin{bmatrix} 
\Sigmam_{\dv \dv} & \textnormal{\textbf{0}} \\ 
\textnormal{\textbf{0}} & \Lm \Sigmam_{\nv \nv } \Lm^{\sf T}  
\end{bmatrix}.
\end{aligned}
\end{IEEEeqnarray}

From the comparison between~\eqref{Sigma_xiaxia} and~\eqref{Sigma_xixi}, DIAs on control systems explicitly introduce the covariance matrix $\Sigmam_{\av \av}$ into the dynamics and the difference between the covariance matrices becomes 
\be
\nonumber
\Sigmam_{\xiv_a   \xiv_a } - \Sigmam_{\xiv   \xiv } = \begin{bmatrix}
   \Am  &  \Bm \Km\\
   \Lm \Cm  & \Am - \Lm \Cm + \Bm \Km
   \end{bmatrix} (\Sigmam_{\xiv_a   \xiv_a } - \Sigmam_{\xiv   \xiv }) \begin{bmatrix}
   \Am  &  \Bm \Km\\
   \Lm \Cm  & \Am - \Lm \Cm + \Bm \Km
   \end{bmatrix}^{\sf T} + \begin{bmatrix} 
\textnormal{\textbf{0}} & \textnormal{\textbf{0}} \\ 
\textnormal{\textbf{0}} & \Lm \Sigmam_{\av \av}  \Lm^{\sf T}  
\end{bmatrix}.
\ee
This causes the covariance matrices $\Sigmam_{\xv_a \xv_a}$ and $\Sigmam_{\hat{\xv}_a \hat{\xv}_a}$ of the stationary distribution under attacks to deviate from the corresponding distribution without attacks. From a stealthy attacker's point-of-view, the attack also needs to avoid being detected. Hence, the attack detection mechanism needs to be accounted for when designing the attacks.
 
\subsection{Attack Detection}
We suppose that the system operator should have security strategies in place, prior to performing state estimation. 
Attack detection is cast as a hypothesis testing problem given by a null hypothesis $\mathcal{H}_0$ and the alternative hypothesis $\mathcal{H}_1$:
\begin{subequations}\label{EqHypTestA}
\begin{align}
  \mathcal{H}_0&:\textnormal{There is no attack;}\\ 
  \mathcal{H}_1&:\textnormal{Sensor measurements are compromised}.
\end{align}
\end{subequations}
At each time $t$, the detector acquires a vector of measurements denoted by $\bar{\yv}(t)$ and decides whether this vector is obtained from the system without attacks in~\eqref{obs_model} or under attacks in~\eqref{obs_measurements}. Given the Gaussianity in the stationary distribution, the hypothesis test is done in terms of the probability density functions of the state variables, the system noise, and the attack on the measurements. Hence, the hypotheses in \eqref{EqHypTestA} become 
\begin{subequations}\label{eq:hypoth_attack}
\begin{align}
  \mathcal{H}_0&: \bar{\yv} \thicksim P_{\yv}, \\
  \mathcal{H}_1&: \bar{\yv} \thicksim P_{\yv_a}.
\end{align}
\end{subequations}
An attack detection procedure $T$ takes the measurements $\bar{\yv}$ and determines if the measurements are under attacks, that is, $T:\mathbb{R}^m\rightarrow\{0,1\}$. Let $T(\bar{\yv} )=0$ denote the case where the attack detection decides on $\Hc_0$ upon the measurements $\bar{\yv}$; and $T(\bar{\yv} )=1$ decides on $\Hc_1$.
The performance of an attack detection is assessed in terms of the Type-I error, denoted by $\alpha\eqdef P\left[T\left(\bar{\yv}\right)=1\right]$ when $\bar{\yv}\thicksim P_{\yv}$; and the Type-II error, denoted by $\beta\eqdef P\left[T\left(\bar{\yv}\right)=0 \right]$ when $\bar{\yv}\thicksim P_{\yv_a}$.
Given the requirement that the Type-I error satisfies $\alpha \leq  \alpha'$, with $\alpha'\in[0,1]$, the likelihood ratio test (LRT) is optimal in the sense that it induces the smallest Type-II error $\beta$~\cite{JN_LRT_33}. In this setting, the LRT is given by
\be\label{lrt}
T(\bar{\yv}) = 
\begin{cases} 
1, & \text{if } L(\bar{\yv}) \geq \tau, \\
0 & \text{else}.
\end{cases}
\ee
with $L(\bar{\yv})$ the likelihood ratio:
\begin{equation}\label{lr}
L(\bar{\yv}) = \frac{f_{\yv_a}(\bar{\yv})}{f_{\yv}(\bar{\yv})},
\end{equation}
where the functions $f_{\yv_a}$ and $f_{\yv}$ are the probability density function (PDF) of $\yv_a$ in~\eqref{dis_ya} and the PDF of $\yv$ in~\eqref{dis_y}, and $\tau\in\mathbb{R}_+$ in~(\ref{lrt}) is the decision threshold. 
%
{The attack detection and the deviation caused on the covariance matrices $\Sigmam_{\hat{\xv} \hat{\xv} }$ and $\Sigmam_{ \xv \xv}$ are the two objectives both for the detector and for the attacker. We emphasize that the LRT is an optimal attack detection method in the sense that among all the detection methods with the same Type-I error, LRT achieves the smallest Type-II error. Even though, in this paper, we study the DIAs construction from attacker's standpoint, an optimal attack detection method provides the best achievable probability of detection for the system operator and also evaluates the stealthiness of the attack construction. }
 
\section{Information-theoretic attacks}\label{sec_Information_Theoretic_Metrics}
\subsection{Information-Theoretic Measure}
In the previous section, as an additive attack vector to measurements $\yv$, the attack vector $\av$ leads to a deviation in the covariance matrix of stationary distribution $P_{\xiv}$. We study DIAs from the attacker's standpoint in this paper. The rationale is that the attack strategies provide the insights on the system vulnerabilities and foundations for the potential protection mechanism. The aims of the attacker are two-fold: (1) stationary distribution $P_{\xiv}$, and (2) staying stealthy. 

Consequently, to develop a universal framework for DIAs, an information-theoretic criterion is adopted. 
Specifically, to quantity the derivation caused by the attacks, {the attacker maximizes the Kullback-Leibler (KL) divergence~\cite{TM_ElementsofIT} between the stationary distributions under attacks and without attacks, denoted by $P_{\xiv_a}$ 
and $P_{\xiv}$, respectively, with $\xiv_a \eqdef (\xv_a, \ \hat{\xv}_a)$ and $\xiv \eqdef 
(\xv, \ \hat{\xv} )$.} 
The KL divergence from $P_{\xiv_a}$ to $P_{\xiv}$ is defined as~\cite{TM_ElementsofIT}:
\be
D(P_{\xiv_a} \|P_{\xiv}) \eqdef \mathbb{E}_{x \sim P_{\xiv_a}} \left[ \log \frac{P_{\xiv_a}(x)}{P_{\xiv}(x)} \right].
\ee

{On the other hand, we adopt the approach in Ye et al.~\cite{YE_TSG_21} to promote the stealth of the attacks by minimizing $D(P_{\yv_a} \|P_{\yv})$, where $P_{\yv_a}$ and $P_{\yv}$ are given in~\eqref{dis_ya} and~\eqref{dis_y}, respectively.} 
DIAs are therefore constructed as a probability distribution $P_{\av}$, which is the solution to the following optimization problem:
\begin{IEEEeqnarray}{ll}\label{obj}
\begin{aligned} 
\min_{\av \sim P_{\av}} & \ \  - D(P_{\xiv_a}   || P_{\xiv}) +\lambda D(P_{\yv_a} \| P_{\yv}),
\end{aligned}
\end{IEEEeqnarray}
where the domain of $P_{\av}$ is the set of Gaussian distributions with dimension $m$; the weighting parameter $\lambda > 0$ determines a tradeoff between the following two attack objectives: (1) the attack disruption $ D(P_{\xiv_a} \| P_{\xiv})$, and (2) attack detection $D(P_{\yv_a} \| P_{\yv})$. Note that larger values of $ D(P_{\xiv_a} \| P_{\xiv})$ indicates that the distribution of steady state and the corresponding estimates under attacks deviates more from the distribution without attacks. The attacker aims to obtain large value of $ D(P_{\xiv_a} \| P_{\xiv})$ while maintain a small $D(P_{\yv_a} \| P_{\yv})$ to reduce the probability of detection. 
{The idea that the optimal attack construction by the attacker takes into account these two terms are conceptually illustrated in Fig. \ref{fig:diag}.} 
\begin{figure}[ht]
\centering
\includegraphics[width=\textwidth]{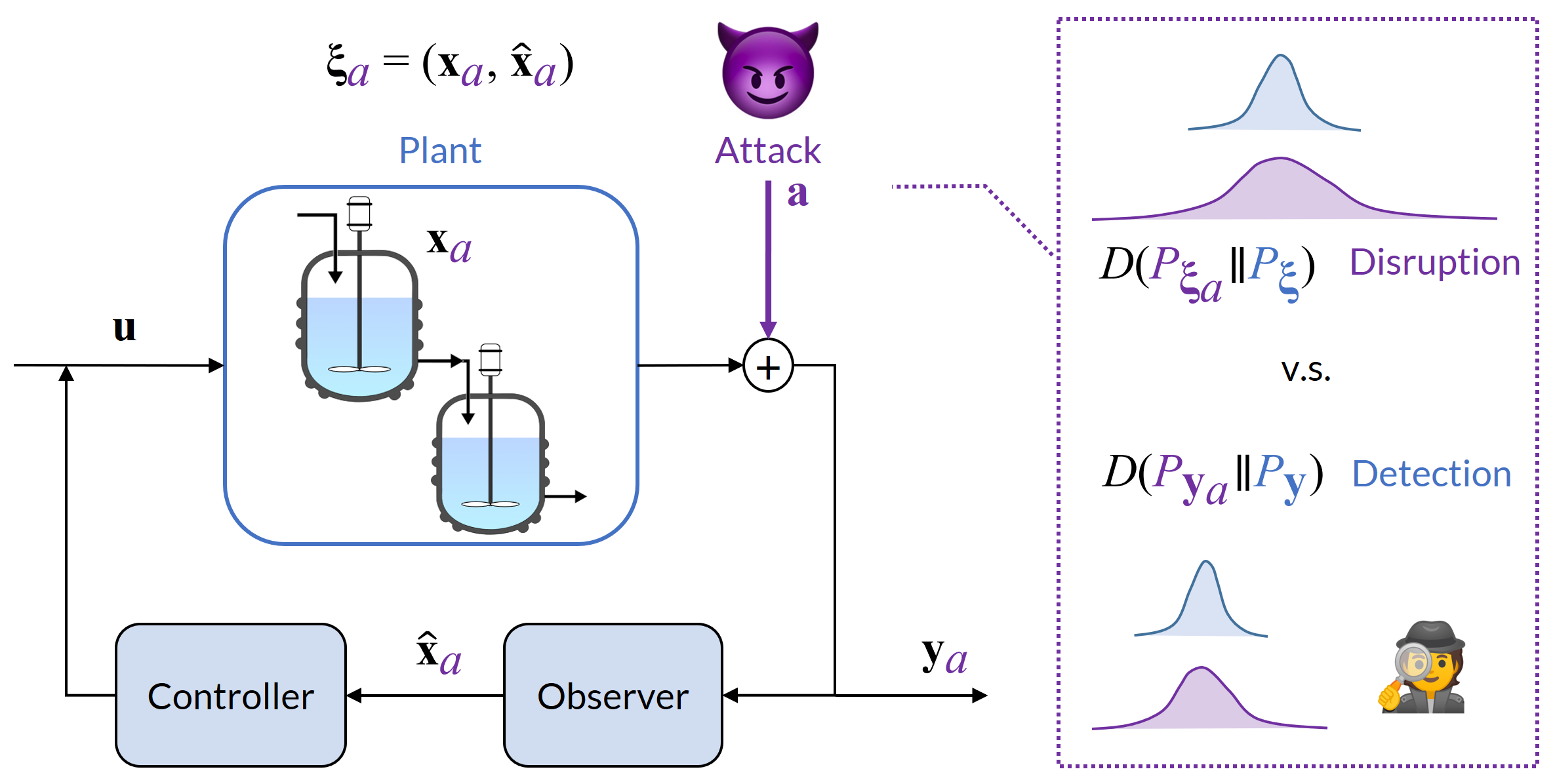}  
\caption{Information-theoretic framework for DIA construction on control systems.}
\label{fig:diag}
\end{figure}

\subsection{Attack Constructions}

Suppose that the attacker seeks for a Gaussian distributed attack $\av \sim {\cal N}(\muv_a, \Sigmam_{\av \av})$. The optimization problem in~\eqref{obj} boils down to the following attack construction problem:
\begin{IEEEeqnarray}{ll}\label{obj_attack} 
\begin{aligned} 
\! \! \min_{{\Sigmam_{\av \av} \in \Sc^m_+, \ \muv_a \in \mathbb{R}^m}} \!\!\! &\!\!\! \!\!\!- D(P_{\xiv_a}   || P_{\xiv}) +\lambda D(P_{\yv_a} \| P_{\yv}),\\
\textnormal{s.t.} & \Upsilonm_{\av \av} = \!\!\sum_{n=0}^{\infty } \! \begin{bmatrix}
   \Am  &  \Bm \Km\\
   \Lm \Cm  & \Am - \Lm \Cm + \Bm \Km
   \end{bmatrix}^n \begin{bmatrix} 
\textnormal{\textbf{0}}   \\ 
\Lm  \\
\end{bmatrix}\! \Sigmam_{\av \av}  [\textnormal{\textbf{0}} \  \ \Lm^{\sf T}] \left(\begin{bmatrix}\!
   \Am \!\! &  \!\!\Bm \Km\\
   \Lm \Cm \!\! & \!\!\Am - \Lm \Cm + \Bm \Km
   \end{bmatrix}^{\sf T}  \right)^n\!\!, \\
   & \Sigmam_{\yv_a \yv_a} - \Sigmam_{\yv \yv} = \Sigmam_{\av \av},
\end{aligned}
\end{IEEEeqnarray}
where $\Upsilonm_{\av \av} \eqdef \Sigmam_{\xiv_a \xiv_a } -  \Sigmam_{\xiv \xiv }$ and the constraints are on the covariance matrices of the probability distributions $P_{\xiv_a}$, $P_{\xiv}$, $P_{\yv_a}$, and $P_{\yv}$, which
come from~\eqref{dis_y},~\eqref{dis_ya},~\eqref{Sigma_xiaxia} and~\eqref{Sigma_xixi}. The infinite sum arises from the explicit solution of the discrete-time algebraic Lyapunov equation~\eqref{Sigma_xiaxia} and~\eqref{Sigma_xixi}.
By elementary operations, the KL divergence between two multivariate Gaussian distributions~$\yv_a \sim \Nc (\muv_a, \Sigmam_{\yv_a \yv_a})$ and $\yv \sim \Nc (\textnormal{\textbf{0}}, \Sigmam_{\yv \yv})$ can be expressed as
	\be\label{KL}
	D(P_{\yv_a} \| P_{\yv}  ) = \dfrac{1}{2} \left(\textnormal{log} \frac{|\Sigmam_{\yv \yv}|}{|\Sigmam_{\yv_a \yv_a}|}- m  +       \textnormal{tr} \left(   \Sigmam_{\yv \yv}^{-1}\Sigmam_{\yv_a \yv_a} \right) +     \textnormal{tr}\left(\Sigmam_{\yv \yv}^{-1}\muv_a \muv_a^{\sf T}     \right)    \right),
	\ee
and similarly
	\be 
	D(P_{\xiv_a} \| P_{\xiv}  ) = \dfrac{1}{2} \left(\textnormal{log} \frac{|\Sigmam_{\xiv \xiv }|}{|\Sigmam_{\xiv_a \xiv_a}|}- 2n  +       \textnormal{tr} \left(   \Sigmam_{\xiv \xiv}^{-1}\Sigmam_{\xiv_a \xiv_a} \right) +     \textnormal{tr}\left(\Sigmam_{\xiv \xiv}^{-1}\muv_{\xiv_a} \muv_{\xiv_a}^{\sf T}     \right)    \right),
	\ee
Therefore, the optimal attack construction can be characterized by the following proposition.
\begin{prop}\label{Prop_obj}
The multivariate Gaussian attack construction $\av \sim {\cal N} ( \muv_a, \Sigmam_{\av \av} )$ that jointly maximizes the disruption captured by $D(P_{\xiv_a} \| P_{\xiv})$ and minimizes the probability of detection characterized by $D(P_{\yv_a} \| P_{\yv})$ in~\eqref{obj} is given by
\begin{IEEEeqnarray}{ll}\label{prop_obj} 
\begin{aligned} 
\min_{{\Sigmam_{\av \av} \in \Sc^m_+, \ \muv_a \in \mathbb{R}^m}} &   \lambda   \left( \textnormal{tr} \left(   \Mm_1  \Sigmam_{\av \av} \right) + \textnormal{tr}\left(\Mm_1 \muv_a \muv_a^{\sf T}     \right)  - \textnormal{log}  \left|  \textbf{I}_m   + \Mm_1 \Sigmam_{\av \av}\right|     \right)   - \textnormal{tr} \left(   \Mm_2 \Sigmam_{\av \av} \right)     - {\textnormal{tr}\left( \Mm_2 \muv_a \muv_a^{\sf T}  \right)}  \\
& + \textnormal{log} \left|\textbf{I}_m + \Mm_2\Sigmam_{\av \av} \right|\\
   \end{aligned}
\end{IEEEeqnarray} 
with $\textnormal{\textbf{0}}$ being a zero matrix with dimension $n$ by $n$ and the matrices $\Mm_1$ and $\Mm_2$ defined as follows
\begin{IEEEeqnarray}{ll}\label{eq_def_M1} 
\begin{aligned} 
& \Mm_1 \eqdef \Sigmam_{\yv \yv}^{-1} \ \textnormal{and}  \\
& \Mm_2 \eqdef [ \textnormal{\textbf{0}} \ \Lm] \sum_{n=0}^{\infty }\left(\begin{bmatrix}
   \Am  &  \Bm \Km\\
   \Lm \Cm  & \Am - \Lm \Cm + \Bm \Km
   \end{bmatrix}^{\sf T} \right)^n \Sigmam_{\xiv \xiv}^{-1} \begin{bmatrix}
   \Am  &  \Bm \Km\\
   \Lm \Cm  & \Am - \Lm \Cm + \Bm \Km
   \end{bmatrix}^n \begin{bmatrix}
   \textnormal{\textbf{0}} \\
   \Lm
   \end{bmatrix}.
   \end{aligned}
\end{IEEEeqnarray} 
\end{prop}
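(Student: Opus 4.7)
The plan is to substitute the closed-form Gaussian KL-divergence formula~\eqref{KL} and its counterpart for $(P_{\xiv_a}, P_{\xiv})$ into the objective of~\eqref{obj_attack}, and then to eliminate the auxiliary variables $\Sigmam_{\yv_a\yv_a}$ and $\Upsilonm_{\av\av}$ in favor of the decision variables $\Sigmam_{\av\av}$ and $\muv_a$ only. The matrices $\Mm_1$ and $\Mm_2$ will appear naturally as the coefficients of the resulting $\Sigmam_{\av\av}$- and $\muv_a\muv_a^{\sf T}$-dependent terms, while additive constants independent of $(\Sigmam_{\av\av},\muv_a)$ can be discarded and the common factor $\tfrac{1}{2}$ absorbed into $\lambda$.

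The measurement-side divergence $D(P_{\yv_a}\|P_{\yv})$ is the easier piece. Using $\Sigmam_{\yv_a\yv_a} = \Sigmam_{\yv\yv}+\Sigmam_{\av\av}$ together with $\Mm_1 = \Sigmam_{\yv\yv}^{-1}$ gives directly (i) $|\Sigmam_{\yv_a\yv_a}|/|\Sigmam_{\yv\yv}| = |\textbf{I}_m + \Mm_1\Sigmam_{\av\av}|$ for the log-determinant ratio, (ii) $\textnormal{tr}(\Sigmam_{\yv\yv}^{-1}\Sigmam_{\yv_a\yv_a}) = m + \textnormal{tr}(\Mm_1\Sigmam_{\av\av})$, so the $-m$ constant in~\eqref{KL} cancels, and (iii) $\textnormal{tr}(\Mm_1\muv_a\muv_a^{\sf T})$ for the mean term. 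This reproduces exactly the $\lambda$-weighted bracket in~\eqref{prop_obj}.

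The state-side divergence $-D(P_{\xiv_a}\|P_{\xiv})$ is handled analogously with $\Sigmam_{\xiv_a\xiv_a} = \Sigmam_{\xiv\xiv}+\Upsilonm_{\av\av}$, so the $-2n$ constant cancels as before. For the trace term, I would substitute the Lyapunov-sum representation of $\Upsilonm_{\av\av}$ from~\eqref{obj_attack}, interchange trace with the summation by linearity, and apply the cyclic property of trace to move $[\textnormal{\textbf{0}}\ \Lm^{\sf T}]$ and its transpose next to $\Sigmam_{\xiv\xiv}^{-1}$; the sum then collapses into the matrix $\Mm_2$ of~\eqref{eq_def_M1} and produces $\textnormal{tr}(\Sigmam_{\xiv\xiv}^{-1}\Upsilonm_{\av\av}) = \textnormal{tr}(\Mm_2\Sigmam_{\av\av})$. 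Applying the same manipulation to the stationary mean $\muv_{\xiv_a}$ induced by the constant attack mean $\muv_a$ (obtained by iterating the deterministic recursion underlying~\eqref{20250107_1}) reduces the mean contribution to $\textnormal{tr}(\Mm_2\muv_a\muv_a^{\sf T})$.

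The main technical obstacle is rewriting the log-determinant $\log|\textbf{I}_{2n}+\Sigmam_{\xiv\xiv}^{-1}\Upsilonm_{\av\av}|$ into the $m$-dimensional form $\log|\textbf{I}_m+\Mm_2\Sigmam_{\av\av}|$. My approach would be to view $\Upsilonm_{\av\av}$ as a bilinear form $\Vm\,\Sigmam_{\av\av}\,\Vm^{\sf T}$ built by stacking the factors $\{F^n[\textnormal{\textbf{0}}^{\sf T}\ \Lm^{\sf T}]^{\sf T}\}_{n\geq 0}$ into a single ``tall'' operator $\Vm$ (with summability ensured by the stability of the closed-loop matrix $F$ in~\eqref{20250107_1}), and then to apply Sylvester's identity $|\textbf{I}+\Am\Bm|=|\textbf{I}+\Bm\Am|$ to swap the $2n$-dimensional outer factor with the $m$-dimensional inner one; the adjoint-Lyapunov identity already used in the trace step would identify the resulting inner $m\times m$ matrix with $\Mm_2$. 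The delicate point is that the infinite Lyapunov sum does not factor into a single rank-$m$ bilinear form in $\Sigmam_{\av\av}$, so securing this step will likely require either a vectorized reformulation of the Lyapunov equation via the Kronecker structure $\textbf{I} - F\otimes F$, or an equivalent operator-theoretic argument equating the two determinants. Once this log-determinant identity is in place, assembling all the pieces and discarding constants independent of $(\Sigmam_{\av\av},\muv_a)$ yields exactly the objective stated in~\eqref{prop_obj}.
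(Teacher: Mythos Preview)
Your plan matches the paper's proof essentially step for step: substitute the Gaussian KL formula, simplify the $\yv$-side using $\Sigmam_{\yv_a\yv_a}=\Sigmam_{\yv\yv}+\Sigmam_{\av\av}$, compute the stationary mean $\muv_{\xiv_a}$ from the closed-loop recursion, and then use trace cyclicity on the Lyapunov sum to collapse both $\textnormal{tr}(\Sigmam_{\xiv\xiv}^{-1}\Upsilonm_{\av\av})$ and $\textnormal{tr}(\Sigmam_{\xiv\xiv}^{-1}\muv_{\xiv_a}\muv_{\xiv_a}^{\sf T})$ into $\Mm_2$-expressions, finally dropping constants and the factor $\tfrac{1}{2}$.

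Your caution about the log-determinant step is well placed: the paper's proof simply asserts $\log|\textbf{I}_{2n}+\Sigmam_{\xiv\xiv}^{-1}\Upsilonm_{\av\av}|=\log|\textbf{I}_m+\Mm_2\Sigmam_{\av\av}|$ ``similarly'' to the trace identity, without the Sylvester/factorization argument you outline, so on this point you have already reached (and arguably exceeded) the paper's level of detail.
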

 \begin{proof} 
{Let $\muv_{\xiv_a}$ be the mean vector of $\xiv_a$. Substitute the explicit expression of KL divergence for multivariate Gaussian distributions in~\eqref{KL} into~\eqref{obj_attack}. It follows that the optimization problem in~\eqref{obj_attack} is as follows:}
{\begin{IEEEeqnarray}{ll} 
\begin{aligned} 
 \min_{{\Sigmam_{\av \av} \in \Sc^m_+, \ \muv_a \in \mathbb{R}^m}} &  - \dfrac{1}{2} \left( \textnormal{log} \frac{|\Sigmam_{\xiv \xiv}|}{| \Sigmam_{\xiv \xiv} + \Upsilonm_{\av \av} |}- 2n  + \textnormal{tr} \left(   \Sigmam_{\xiv \xiv}^{-1}(\Sigmam_{\xiv \xiv} + \Upsilonm_{\av \av})\right) + \textnormal{tr}\left(\Sigmam_{\xiv \xiv }^{-1} \muv_{\xiv_a}\muv_{\xiv_a}^{\sf T} \right) \right) \\ & + \lambda \dfrac{1}{2} \left( \textnormal{log} \frac{|\Sigmam_{\yv \yv}|}{|\Sigmam_{\yv \yv} + \Sigmam_{\av \av}|}- m  +       \textnormal{tr} \left(   \Sigmam_{\yv \yv}^{-1}(\Sigmam_{\yv \yv} + \Sigmam_{\av \av})\right) + \textnormal{tr}\left(\Sigmam_{\yv \yv}^{-1}\muv_a \muv_a^{\sf T}     \right)   \right) \\
\textnormal{s.t.} & \ \  \Upsilonm_{\av \av} =  \sum_{n=0}^{\infty } \begin{bmatrix}
   \Am  &  \Bm \Km\\
   \Lm \Cm  & \Am - \Lm \Cm + \Bm \Km
   \end{bmatrix} ^n \begin{bmatrix} 
\textnormal{\textbf{0}}   \\ 
\Lm  \\
\end{bmatrix}\Sigmam_{\av \av} [\textnormal{\textbf{0}} \  \ \ \Lm^{\sf T}] \left(\begin{bmatrix}
   \Am  &  \Bm \Km\\
   \Lm \Cm  & \Am - \Lm \Cm + \Bm \Km
   \end{bmatrix}^{\sf T} \right)^n.
   \end{aligned}
\end{IEEEeqnarray} }
{From~\eqref{20250107_1}, at steady state, the mean vector is $\muv_{\xiv_a} = \left(\textbf{I}_{2n} - \begin{bmatrix}
   \Am  &  \Bm \Km\\
   \Lm \Cm  & \Am - \Lm \Cm + \Bm \Km
   \end{bmatrix}\right)^{-1} \begin{bmatrix}
   \textbf{0} \\
   \Lm \muv_a
   \end{bmatrix}$. From the trace operation, it follows that }
   \begin{IEEEeqnarray}{ll}
\begin{aligned} 
\textnormal{tr}\left(\Sigmam_{\xiv \xiv }^{-1} \muv_{\xiv_a}\muv_{\xiv_a}^{\sf T} \right) = & \textnormal{tr}\left( \Sigmam_{\xiv \xiv}^{-1}   \sum_{n=0}^{\infty } \begin{bmatrix}
   \Am  &  \Bm \Km\\
   \Lm \Cm  & \Am - \Lm \Cm + \Bm \Km
   \end{bmatrix}^n     \begin{bmatrix}
   \textbf{0} \\
   \Lm \muv_a 
   \end{bmatrix}[\textbf{0} \ \muv_a^{\sf T} \Lm^{\sf T}] \left(\begin{bmatrix}
   \Am  &  \Bm \Km\\
   \Lm \Cm  & \Am - \Lm \Cm + \Bm \Km
   \end{bmatrix}^n \right)^{\sf T} \right). \\
   = & \textnormal{tr}\left( [\textbf{0} \  \Lm^{\sf T}] \sum_{n=0}^{\infty }\left(\begin{bmatrix}
   \Am  &  \Bm \Km\\
   \Lm \Cm  & \Am - \Lm \Cm + \Bm \Km
   \end{bmatrix}^n \right)^{\sf T} \Sigmam_{\xiv \xiv}^{-1}    \begin{bmatrix}
   \Am  &  \Bm \Km\\
   \Lm \Cm  & \Am - \Lm \Cm + \Bm \Km
   \end{bmatrix}^n     \begin{bmatrix}
   \textbf{0} \\
   \Lm 
   \end{bmatrix} \muv_a  \muv_a^{\sf T}   \right). \\
   \end{aligned}
\end{IEEEeqnarray} 
Hence, 
$\textnormal{tr}\left(\Sigmam_{\xiv \xiv }^{-1} \muv_{\xiv_a}\muv_{\xiv_a}^{\sf T} \right) = \textnormal{tr}\left( \Mm_2  \muv_a\muv_a^{\sf T} \right)$. 
Similarly,
$\textnormal{tr}\left( \Sigmam_{\xiv \xiv}^{-1} ( \Sigmam_{\xiv \xiv} + \Upsilonm_{\av \av})\right) = 2n + \textnormal{tr}\left(\Sigmam_{\xiv \xiv }^{-1} \Upsilonm_{\av \av}\right) = 2n + \textnormal{tr}\left(\Mm_2\Sigmam_{\av\av}\right)$ and $\textnormal{log} | \textbf{I}_{2n} + \Sigmam_{\xiv \xiv}^{-1} \Upsilonm_{\av \av} | = \textnormal{log} | \textbf{I}_m + \Mm_2 \Sigmam_{\av \av}|$.
The proof is completed by removing the constant terms and a common factor.
\end{proof}
{In the attack construction, we assume that the attacker knows the system model and the second-order moments of the sensor measurements as well as the joint vector of states and state estimates, i.e., $\Sigmam_{\yv \yv}$ and $\Sigmam_{\xiv \xiv}$. Thus, the attacker seeks to optimize both the mean vector and the covariance matrix in~\eqref{prop_obj}.  
From a practical point of view, making the system models and the historical data available to the attacker poses a security threat, and in fact, due to practical and operational constraints, it is more reasonable to assume that the attacker has limited knowledge of the system. However, it is generally difficult to model how much the attacker may know about the system. Hence, as a conservative approach, we assume that the attacker has full knowledge on the system model and the second-order moments. In the next section, we assume the attacker has full access to the measurements of the system, which yields a {\it full attack}. This setting serves as a most conservative approach for the system operators to study possible effects of DIAs.
Following with full attack, we restrict the access to the measurements for the attacker but still assume full knowledge of the model and the second-order moments, which poses a sparsity constraint on attack constructions and yields {\it sparse attack}.}

\section{Full attack}\label{sec_fullattack}
In this section, we consider the attack construction where the attacker has access to all the sensor measurements on the system, which yields a {\it full attack}. In other words, there is no sparsity constraints on $\muv_a \in \mathbb{R}^m$ and $\Sigmam_{\av \av} \in \Sc^m_+$.
In this setting, the attack construction is to solve the optimization problem in~\eqref{prop_obj} as discussed in the last section. First, note that $\mathbb{R}^m$ and $\Sc_+^m$ are both convex sets. For the mean vector, the following proposition characterizes the condition for the convexity of the cost function with respective to $\muv_a$ and gives an expression of the optimal solution.
\begin{prop}\label{Prop_mu}
Suppose that 
\be\label{assp_lambda_mu}
\lambda \Sigmam_{\yv \yv}^{-1}       \succ \Mm_2,
\ee
where $\Mm_2$ and $\Sigmam_{\yv \yv}$ are as in~\eqref{eq_def_M1} and~\eqref{dis_y}, respectively. Then, the cost function in~\eqref{prop_obj} is convex in $\muv_a$ and the optimal solution for $\muv_a$ is 
\be\label{opt_mu}
\muv_a^* = \textnormal{\textbf{0}}.
\ee
\end{prop}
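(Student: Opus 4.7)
The plan is to isolate the dependence of the cost function in \eqref{prop_obj} on $\muv_a$, rewrite it as a single quadratic form, and then invoke the assumption~\eqref{assp_lambda_mu} to conclude both strict convexity and the location of the minimizer.

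First, I would inspect the objective in \eqref{prop_obj} and observe that the only terms that involve $\muv_a$ are
\begin{equation*}
\lambda\,\textnormal{tr}\!\left(\Mm_1 \muv_a \muv_a^{\sf T}\right) \;-\; \textnormal{tr}\!\left(\Mm_2 \muv_a \muv_a^{\sf T}\right);
\end{equation*}
the remaining terms depend only on $\Sigmam_{\av \av}$ and are therefore constants with respect to $\muv_a$. Using the cyclic property of the trace and the identity $\textnormal{tr}(\Am \muv_a \muv_a^{\sf T})=\muv_a^{\sf T}\Am\muv_a$, these two terms collapse into
\begin{equation*}
g(\muv_a) \;\eqdef\; \muv_a^{\sf T}\!\left(\lambda \Mm_1 - \Mm_2\right)\muv_a \;=\; \muv_a^{\sf T}\!\left(\lambda\,\Sigmam_{\yv\yv}^{-1} - \Mm_2\right)\muv_a,
\end{equation*}
where the second equality uses the definition $\Mm_1 = \Sigmam_{\yv\yv}^{-1}$ from~\eqref{eq_def_M1}.

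Next, I would invoke assumption~\eqref{assp_lambda_mu}, namely $\lambda\,\Sigmam_{\yv\yv}^{-1}\succ \Mm_2$, which is exactly the statement that the Hessian $2(\lambda\,\Sigmam_{\yv\yv}^{-1}-\Mm_2)$ of $g$ is positive definite. Consequently $g$ is strictly convex in $\muv_a$, and since the remaining parts of the objective do not depend on $\muv_a$, the full cost in~\eqref{prop_obj} is strictly convex in $\muv_a$.

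Finally, strict convexity of a quadratic form with positive definite matrix and no linear term implies a unique global minimizer at the origin, yielding $\muv_a^{*} = \textnormal{\textbf{0}}$ as claimed in~\eqref{opt_mu}. There is no real obstacle here beyond carefully tracking which terms in \eqref{prop_obj} are $\muv_a$-dependent; the only subtlety is recognizing that the log-determinant and covariance-trace terms are functions of $\Sigmam_{\av\av}$ alone, so the convexity/minimization in $\muv_a$ reduces to a single quadratic form whose sign is exactly controlled by the hypothesis~\eqref{assp_lambda_mu}.
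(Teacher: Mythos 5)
Your proof is correct and follows essentially the same route as the paper's: both isolate the $\muv_a$-dependent terms $\lambda\,\textnormal{tr}(\Mm_1 \muv_a\muv_a^{\sf T}) - \textnormal{tr}(\Mm_2\muv_a\muv_a^{\sf T})$, rewrite them as the quadratic form $\muv_a^{\sf T}(\lambda\Sigmam_{\yv\yv}^{-1}-\Mm_2)\muv_a$, and conclude from the positive definiteness guaranteed by~\eqref{assp_lambda_mu} that the minimizer is $\muv_a^*=\textnormal{\textbf{0}}$. No gaps.
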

\begin{proof}
From~\eqref{prop_obj}, by removing the constants, the optimal attack construction $\av \sim {\cal N}(\muv_a, \Sigmam_{\av \av})$ with respect to $\muv_a$ is determined by
\begin{IEEEeqnarray}{ll}\label{obj_mu} 
\min_{ \muv_a \in \mathbb{R}^m}   \lambda   \textnormal{tr}\left(\Sigmam_{\yv \yv}^{-1}\muv_a \muv_a^{\sf T}     \right)    -  \textnormal{tr}\left(  \Mm_2 \muv_a \muv_a^{\sf T} \right).
\end{IEEEeqnarray} 
Considering the property of the trace operation $\textnormal{tr}\left(\Sigmam_{\yv \yv}^{-1} \muv_a \muv_a^{\sf T}     \right) = \muv_a^{\sf T}\Sigmam_{\yv \yv}^{-1} \muv_a$, the minimization problem in~\eqref{obj_mu} is equivalent to 
\begin{IEEEeqnarray}{ll}\label{opt_meanvector} 
\begin{aligned} 
\min_{ \muv_a \in \mathbb{R}^m}  \muv_a^{\sf T} \left( \lambda    \Sigmam_{\yv \yv}^{-1} -   \Mm_2 \right) \muv_a.
\end{aligned}
\end{IEEEeqnarray} 
The proof is completed by noting that the assumption in~\eqref{assp_lambda_mu} implies the positive definiteness of the matrix in $\lambda \Sigmam_{\yv \yv}^{-1} - \Mm_2$.
\end{proof}
{The idea of Proposition~\ref{Prop_mu} is that when there is no constraint on the mean vector $\muv_a$ of the attack $\av$ in the full attack, the optimal mean vector is $\muv_a^* =\textbf{0}$ under the condition $\lambda \Sigmam_{\yv \yv}^{-1} \succ \Mm_2$. In other words, if the attacker has full access to all the sensor measurements on the system, the optimal attack construction is {\it centered}. This is naturally expected
as in this case, $D(P_{\yv_a} \| P_{\yv})$ measures the distance from the distribution $P_{\yv_a}$ to $P_{\yv}$. Given that the mean vector of $P_{\yv}$ is a zero vector as in~\eqref{dis_y}, the optimal mean vector of $P_{\yv_a}$ to minimize $D(P_{\yv_a} \| P_{\yv})$ needs to be a zero vector as well. }
However, the attacker also considers the attack effect on the stationary distributions $D(P_{\xiv_a} \| P_{\xiv })$ with a weighting parameter $\lambda$. {The terms in~\eqref{obj_mu} that are associated with $D(P_{\yv_a} \| P_{\yv})$ and $D(P_{\xiv_a} \| P_{\xiv })$ are $\textnormal{tr}\left(\Sigmam_{\yv \yv}^{-1}\muv_a \muv_a^{\sf T} \right)$ and $\textnormal{tr}\left(\Mm_2 \muv_a \muv_a^{\sf T} \right)$, respectively. } 
The assumption in~\eqref{assp_lambda_mu} in this proposition implies a lower bound of $\lambda$ to overcome the non-convexity introduced by the $ - D(P_{\xiv_a} \| P_{\xiv })$ term and 
make the cost function in~\eqref{opt_meanvector} convex in $\muv_a$. Hence, the weighting parameter $\lambda$ that governs the tradeoff between $D(P_{\yv_a} \| P_{\yv})$ and $D(P_{\xiv_a} \| P_{\xiv })$ is constrained by a lower bound, which means that the attacker needs to consider attack detection to a certain minimum extent.
It is worth noting that the assumption is associated with the system model, covariance matrix of the measurements, the observer gain and the covariance of the dynamics of the states and estimates in the system. 

{As the attack vector is $\av  \sim {\cal N}(\muv_a, \Sigmam_{\av \av})$, Proposition~\ref{Prop_mu} provides the optimal solution for $\muv_a$. We now proceed with the optimization with respect to the covariance matrix $\Sigmam_{\av \av}$ in~\eqref{prop_obj}. Replacing $\muv_a$ with the optimal mean vector $\muv_a^* = \textbf{0}$, from Proposition~\ref{Prop_obj}, the attack construction $\av \sim {\cal N} ( \textnormal{\textbf{0}}, \Sigmam_{\av \av} )$ is now cast as:}
\begin{IEEEeqnarray}{ll}\label{1114_01} 
\begin{aligned} 
\min_{\Sigmam_{\av \av} \in \Sc^m_+ } &   \lambda   \left( \textnormal{tr} \left(   \Mm_1  \Sigmam_{\av \av} \right)    - \textnormal{log}  \left|  \textbf{I}_m   + \Mm_1 \Sigmam_{\av \av}\right|     \right)   - \textnormal{tr} \left(   \Mm_2 \Sigmam_{\av \av} \right)    + \textnormal{log} \left|\textbf{I}_m + \Mm_2\Sigmam_{\av \av} \right|.
   \end{aligned}
\end{IEEEeqnarray} 
{We denote the cost function in~\eqref{1114_01} as $f(\Sigmam_{\av \av})$ and examine the first-order derivatives and the Hessian.} Given the symmetry of $\Sigmam_{\av \av}$ and the matrix differential~\cite{seber}, after elementary operations, we find
\be\label{derivative_fullattack} 
\begin{aligned}
\frac{d f}{d \Sigmam_{\av \av}}= \Sigmam_{\av \av}( \Mm_2- \lambda \Mm_1) \Sigmam_{\av \av} + \Mm_1^{-1} \Mm_2 \Sigmam_{\av \av}    - \lambda \Sigmam_{\av \av}\Mm_1 \Mm_2^{-1},
\end{aligned}
\ee
and after the vectorization of $\Sigmam_{\av \av}$, the Hessian is then given as follows:
\be\label{Hessian}
\frac{\partial f^2}{\partial \textnormal{vec}(\Sigmam_{\av \av})\partial \textnormal{vec}(\Sigmam_{\av \av})} = 2\Sigmam_{\av \av} \otimes (\Mm_2 -\lambda \Mm_1) + \textbf{I}_m \otimes ( \Mm_2 \Mm_1^{-1} + \Mm_1^{-1}\Mm_2) - \lambda( \Mm_1 \Mm_2^{-1} + \Mm_2^{-1}\Mm_1 ) \otimes \textbf{I}_m.
\ee
To establish the convexity of the cost function and the corresponding optimal solution for the minimization problem in~\eqref{1114_01}, we propose the following theorem. The rationale of this theorem is that the first-order derivative in~\eqref{derivative_fullattack} gives the stationary point and a positive definite Hessian matrix in~\eqref{Hessian} can guarantee the convexity of the cost function. 
Hence, the attacker can construct the optimal attack accordingly. {To that end, the following theorem provides the convexity conditions for the cost function $f(\Sigmam_{\av \av})$ and the optimal solution $\Sigmam_{\av \av}$ of the optimization problem in~\eqref{1114_01}, i.e., the optimal covariance matrix of the attack vector.}
\begin{theorem}\label{Theorem_fullattack}
Suppose that (i) all eigenvalues of $ (\Mm_1^{-1}\Mm_2)^{\sf T} -\lambda \Mm_1\Mm_2^{-1}$ are positive, (ii) all eigenvalues of $ \Mm_2 - \lambda \Mm_1 $ are negative, and furthermore (iii) $\lambda$ satisfies the following condition  
    \be\label{condition_second_fullattack}
    \lambda < \frac{\lambda_{\min} \left(2\Sigmam_{\av \av} \otimes\Mm_2 \right)}{\lambda_{\max}\left( 2\Sigmam_{\av \av} \otimes\Mm_1 + (\Mm_1\Mm_2^{-1}  + \Mm_2^{-1}\Mm_1) \otimes \textbf{I}_m \right)},
    \ee 
    where $\Sigmam_{\av \av}$ is as follows:
    \begin{IEEEeqnarray}{ll}\label{Sigma_aa_inv}
    \begin{aligned} 
    \textnormal{vec} ( \Sigmam_{\av \av}^{-1}) = &- (( \Qm + \Pm^{\sf T} ) \otimes \textbf{I} + \textbf{I} \otimes (\Pm^{\sf T} + \Qm) )^{-1}\textnormal{vec} (\Sm + \Sm^{\sf T} ),
    \end{aligned}
    \end{IEEEeqnarray} 
    with $\Sm \eqdef \Mm_2 - \lambda \Mm_1$, $\Pm \eqdef \Mm_1^{-1} \Mm_2$, and $\Qm \eqdef -\lambda \Mm_1\Mm_2^{-1}$. Then, $\Sigmam_{\av\av}$ such that~\eqref{Sigma_aa_inv} holds is unique and optimal.
\end{theorem}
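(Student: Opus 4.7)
My plan is to carry out the standard unconstrained-convex-optimization certificate in two parts: (a) show that the first-order stationarity equation $df/d\Sigmam_{\av \av}=\textnormal{\textbf{0}}$ admits the unique symmetric positive-definite solution stated in~\eqref{Sigma_aa_inv}, and (b) verify that the Hessian~\eqref{Hessian} evaluated at that solution is positive definite under assumptions~(ii)--(iii), so that this stationary point is in fact the unique global minimizer on the open convex cone $\Sc_+^m$.

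For part~(a), I would begin by setting the expression in~\eqref{derivative_fullattack} to zero and, using $\Sigmam_{\av \av}\succ\textnormal{\textbf{0}}$, pre- and post-multiply the identity by $\Sigmam_{\av \av}^{-1}$. With the shorthand $\Sm=\Mm_2-\lambda\Mm_1$, $\Pm=\Mm_1^{-1}\Mm_2$, $\Qm=-\lambda\Mm_1\Mm_2^{-1}$ and $\Xm\eqdef\Sigmam_{\av \av}^{-1}$, this collapses the quadratic matrix identity into the \emph{linear} equation $\Xm\Pm+\Qm\Xm=-\Sm$. Enforcing symmetry $\Xm=\Xm^{\sf T}$ by averaging with the transposed equation gives the Lyapunov-type form
\begin{equation}
\Xm(\Pm+\Qm^{\sf T})+(\Pm^{\sf T}+\Qm)\Xm=-(\Sm+\Sm^{\sf T}),
\end{equation}
which after vectorization via $\textnormal{vec}(\Am\Xm\Bm)=(\Bm^{\sf T}\otimes\Am)\textnormal{vec}(\Xm)$ reproduces exactly~\eqref{Sigma_aa_inv}. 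Uniqueness follows because the Kronecker-sum coefficient matrix has eigenvalues equal to pairwise sums of the eigenvalues of $\Pm^{\sf T}+\Qm=(\Mm_1^{-1}\Mm_2)^{\sf T}-\lambda\Mm_1\Mm_2^{-1}$; assumption~(i) makes all of these strictly positive, and hence the Sylvester operator is invertible.

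For part~(b), I would regroup the Hessian~\eqref{Hessian} in powers of $\lambda$ as $\bigl[\,2\Sigmam_{\av \av}\otimes\Mm_2+\textbf{I}_m\otimes(\Mm_2\Mm_1^{-1}+\Mm_1^{-1}\Mm_2)\,\bigr]-\lambda\bigl[\,2\Sigmam_{\av \av}\otimes\Mm_1+(\Mm_1\Mm_2^{-1}+\Mm_2^{-1}\Mm_1)\otimes\textbf{I}_m\,\bigr]$. Assumption~(ii) ensures that the $\Sigmam_{\av \av}\otimes(\Mm_2-\lambda\Mm_1)$ contribution sits on the favorable side, and a standard Weyl-inequality argument on Kronecker sums shows that condition~\eqref{condition_second_fullattack} is precisely the threshold on $\lambda$ guaranteeing that the positive block dominates the $\lambda$-scaled block. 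This yields a positive definite Hessian; combined with the uniqueness established in part~(a), strict convexity in a neighborhood of $\Sigmam_{\av \av}$ upgrades the critical point to the global minimum on $\Sc_+^m$.

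The main obstacle I foresee is the interplay between the symmetry constraint and the algebraic solution. The symmetrized Sylvester equation formally has a unique $\Xm$, but I must confirm that this $\Xm$ is (i) itself symmetric, (ii) positive definite so that $\Sigmam_{\av \av}=\Xm^{-1}$ is a valid covariance matrix, and (iii) actually satisfies the \emph{original} non-symmetric equation $\Xm\Pm+\Qm\Xm=-\Sm$ rather than merely its symmetrized companion. Handling this gap, together with the implicit nature of~\eqref{condition_second_fullattack} (which involves the unknown $\Sigmam_{\av \av}$ itself and must be read as an a posteriori verification on the solution produced by~\eqref{Sigma_aa_inv}), will be the most delicate part of the argument.
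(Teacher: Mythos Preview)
Your overall architecture --- reduce the first-order condition to a linear equation in $\Xm=\Sigmam_{\av\av}^{-1}$, vectorize, then certify the Hessian --- is exactly the paper's route. The gap is precisely the obstacle you flag at the end: your Kronecker-sum eigenvalue argument gives only invertibility of the Sylvester operator, hence \emph{uniqueness} of $\Xm$, but says nothing about positive definiteness. The paper closes this by recognizing the symmetrized stationarity equation
\[
(\Pm^{\sf T}+\Qm)\,\Xm + \Xm\,(\Pm+\Qm^{\sf T}) = -(\Sm+\Sm^{\sf T})
\]
as a continuous-time algebraic Lyapunov equation. Assumption~(i) makes $-(\Pm^{\sf T}+\Qm)$ Hurwitz, and assumption~(ii), which you reserved for part~(b), is actually needed \emph{here}: since $\Mm_1$ and $\Mm_2$ are both symmetric, $\Sm=\Mm_2-\lambda\Mm_1$ is symmetric, so negativity of all its eigenvalues gives $\Sm\prec\textnormal{\textbf{0}}$ and hence $-(\Sm+\Sm^{\sf T})=-2\Sm\succ\textnormal{\textbf{0}}$. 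The standard Lyapunov theorem then yields existence, uniqueness, symmetry, \emph{and} $\Xm\succ\textnormal{\textbf{0}}$ in one stroke. That is the missing idea; once you invoke it, your whole obstacle paragraph dissolves.

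Your remaining two worries are not genuine obstructions. Symmetry of $\Xm$ comes for free from the Lyapunov argument (or, in your language, because the right-hand side $-(\Sm+\Sm^{\sf T})$ is symmetric and the Sylvester operator preserves the symmetric subspace). And you do not need $\Xm$ to satisfy the unsymmetrized equation $\Xm\Pm+\Qm\Xm=-\Sm$: the optimization variable $\Sigmam_{\av\av}$ is symmetric, so only the symmetric part of the gradient is relevant, and that is what the symmetrized equation zeros out. The paper works entirely with the symmetrized form and never revisits the one-sided version.
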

\begin{proof}
Symmetrizing~\eqref{derivative_fullattack} yields
$\Sigmam_{\av \av} (\Sm + \Sm^{\sf T}) \Sigmam_{\av \av} + \Sigmam_{\av \av} (\Qm + \Pm^{\sf T}  ) + (\Pm + \Qm^{\sf T}) \Sigmam_{\av \av}   = \textnormal{\textbf{0}}.$
Under the given conditions $ - (\Pm^{\sf T} + \Qm )$ is Hurwitz and $\Sm + \Sm^{\sf T} \prec \textnormal{\textbf{0}}$. Hence, there exists a unique positive definite solution $X$ such that
\begin{IEEEeqnarray}{ll}\label{1107_1}
\begin{aligned} 
-(\Qm + \Pm^{\sf T}  )X - X  (\Pm + \Qm^{\sf T}) = & \Sm + \Sm^{\sf T}.
\end{aligned}
\end{IEEEeqnarray} 
The existence and uniqueness are guaranteed by the properties of continuous-time algebraic Lyapunov equations~\cite{seber}. The solution, given on the right-hand-side of~\eqref{Sigma_aa_inv}, is a stationary point. The convexity is guaranteed if and only if the Hessian in~\eqref{Hessian} is positive semidefinite in the domain, for which a sufficient condition is given in~\eqref{condition_second_fullattack}.
The proof is completed.
\end{proof}  
Theorem~\ref{Theorem_fullattack} provides the analytical solution for the full attack constructions and the conditions to guarantee the uniqueness and optimality of the solution. 
{Note that without convexity, a stationary point is not an optimal solution to the attack construction. The optimality only holds when the condition in~\eqref{condition_second_fullattack} is satisfied, and hence attacker must choose $\lambda$ carefully such that the condition is satisfied to get an optimal attack. Apart from the upper bound of $\lambda$ discussed in this Theorem, the attacker needs to use a $\lambda$ that conforms the lower bound of $\lambda$ as in~\eqref{assp_lambda_mu} in Proposition~\ref{Prop_mu}. Both the upper and lower bounds are associated with system model and second-order moments of process variables. In fact, there exist cases that the attacker can not find a feasible $\lambda$ satisfying both conditions. Consequently, it is not guaranteed that such an optimal full attack always exists.} 

The attack constructions in~\eqref{Sigma_aa_inv} and~\eqref{opt_mu} yield an attack vector where all the entries of the attack realizations are allowed to be nonzero, which means that the attack implementation requires full access to all the measurements $\yv$ of the systems. In practice, DIAs may intrude only a subset of the sensing infrastructures due the vulnerabilities existing in a local data acquisition mechanism. For that reason, studying full attacks only provides a baseline for DIAs on control systems. It is the study of sparse attacks that restrict the attacker's access to a limited amount of measurements that is of particular interest. 

\section{Sparse Attack}\label{sec_kattack}
In this section, we study sparse attacks where the attacker seeks for the attacks over a subset of measurements on the system, i.e., we pose a $k$-sparsity constraint on attack vector. {In the previous section, the attacker has full access to all the sensor measurements in the system, whereas in this section, the attacker has access to only $ k < m$ sensor measurements. }Specifically, we set the feasible domain in~\eqref{obj_attack} as the set of a $m$-dimensional Gaussian distributions that place nonzero masses on $k < m$ entries on the attack vector $\av$, which yields a {\it $k$-sparse attack}. The $k$-sparsity constraint is formulated as $|\textnormal{supp}(\av)|_0 = k$, where $\textnormal{supp}(\av )$ denotes the support of the vector $\av$, that is, $\textnormal{supp}(\av ) = \{i: \av_i \neq 0\}$. Following the conclusion in Proposition~\ref{Prop_mu}, we assume $\lambda  \Mm_2^{-1} \succ \Sigmam_{\yv \yv}$ and obtain a zero vector as the optimal mean vector for the $k$-sparse attack construction. Therein, the attacker only needs to consider the sparsity constraint on the covariance matrix of the attack vector $\Sigmam_{\av \av}$, which now translates into a constraint on the number of nonzero entries in the diagonal of $\Sigmam_{\av \av}$. Hence, the feasible domain of $\Sigmam_{\av \av}$ becomes:
\be
\label{eq_def_Sk}
\Sc_k\eqdef \left\{\Wm\in \Sc^m_+: | \textnormal{supp}(\textnormal{diag}(\Wm) ) |_0 =k \right\},
\ee
where $\textnormal{diag}(\Wm)$ denotes the vector formed by the diagonal entries of $\Wm$.
More specifically, the set in~\eqref{eq_def_Sk} denotes a set of positive semidefinite matrices with only $k$ nonzero entries in the diagonal. {In other words, the attacker is looking for a covariance matrix $\Sigmam_{\av \av}$ with $k$ nonzero entries in the diagonal, which correspond to the measurements that the attacker needs to attack, and the nonzero entries are the attack variances on the corresponding measurements.}
{Consequently, the $k$-sparse attack construction is cast as}
\begin{IEEEeqnarray}{ll}\label{att_k}
\begin{aligned}
 \min_{   {\Sigmam_{\av \av} \in \Sc_k}}  &  \lambda \left( \textnormal{tr} \left(   \Mm_1  \Sigmam_{\av \av} \right) - \textnormal{log}  \left|\textbf{I} + \Mm_1 \Sigmam_{\av \av}\right| \right)   -  \textnormal{tr} \left(    \Mm_2 \Sigmam_{\av \av} \right)  + \textnormal{log} \left|\textbf{I}_{m} +  \Mm_2 \Sigmam_{\av \av} \right|,  
\end{aligned}
\end{IEEEeqnarray} 
where the new feasible domain $\Sc_k$ is as in~\eqref{eq_def_Sk}. The feasible domain $\Sc_k$ naturally leads to a combinatorial nature of the problem, where $k$ measurements are selected from the full set of the measurements in the system and the corresponding variance are determined. Therefore, a recursive or separable substructure that simplifies the search for the optimal solution is essential to this problem. {In this paper, we tackle the combinatorial nature of this problem by breaking down the measurement selection problem in a one-at-a-time fashion. This leads to single measurement attacks in a sequential process. In the single measurement attack case, the attack construction boils down to assessing the optimal attack for each measurement and comparing the attack performance among all the measurements. }

\subsection{Optimal Single Measurement Attacks}\label{sec_singleattack}
In the single measurement attack construction, the domain in~\eqref{eq_def_Sk} is narrowed down to the set of matrices where only one nonzero entry is contained in the diagonal and all the other entries are zeros, which we denote by $\Sc_1$. {This nonzero entry corresponds to the attack variance on the single measurement that is attacked}. Hence, the  minimization problem in~\eqref{att_k} is cast as follows:
\begin{IEEEeqnarray}{ll}\label{1120_1} 
\begin{aligned}
 \min_{   {\Sigmam_{\av \av} \in \Sc_1}}  &  \lambda   \left( \textnormal{tr} \left(   \Mm_1  \Sigmam_{\av \av} \right)    - \textnormal{log}  \left|\textbf{I} + \Mm_2 \Sigmam_{\av \av}\right|     \right) - \textnormal{tr} \left(    \Mm_2 \Sigmam_{\av \av} \right)  
 + \textnormal{log} \left|\textbf{I}_{m} +  \Mm_2 \Sigmam_{\av \av} \right|.
\end{aligned}
\end{IEEEeqnarray} 
{To solve the problem above, we propose the following theorem that provides a condition on $\lambda$ to guarantee the existence of an optimal solution to the single measurement attack construction, as well as the explicit form of the solution. }
\begin{theorem}
Let $i \in \{1,2,...,m\}$ be the index of the attacked sensor such that $\ev_i^{\sf T}\Sigmam_{\av \av} \ev_i = v > 0 $, namely, the variance of the single measurement attack is positive.
Denote matrices $\Mm_1$ and $\Mm_2 $ as in~\eqref{eq_def_M1}. Suppose that the weighting parameter $\lambda$ satisfies
\be\label{singleattack_condition_lambda}
\frac{\ev_i^{\sf T} \Mm_2\ev_i}{\ev_i^{\sf T} \Mm_1\ev_i} < \lambda < \frac{(\ev_i^{\sf T} \Mm_2\ev_i)^2}{(\ev_i^{\sf T} \Mm_1\ev_i)^2}.
\ee
Then, the optimal solution to the single measurement attack problem in~\eqref{1120_1} uniquely exists, and is specified by
\begin{subequations}\label{singleattack_sol}
\begin{align}
v^*_i = & \frac{ (\ev_i^{\sf T} \Mm_2 \ev_i)^2 -\lambda (\ev_i^{\sf T} \Mm_1\ev_i)^2  }{ (\lambda \ev_i^{\sf T} \Mm_1\ev_i - \ev_i^{\sf T} \Mm_2 \ev_i) \ev_i^{\sf T} \Mm_1\ev_i \ev_i^{\sf T} \Mm_2 \ev_i    },\\
i^* = & \min_i \lambda   \left( \textnormal{tr} \left(   \Mm_1 v_i^* \ev_i \ev_i^{\sf T} \right)  - \textnormal{log}  \left|\textbf{I} + \Mm_1 v_i^* \ev_i \ev_i^{\sf T}\right|     \right)    -  \textnormal{tr} \left(    \Mm_2 v_i^* \ev_i \ev_i^{\sf T} \right)  + \textnormal{log} \left|\textbf{I}_{m} +  \Mm_2 v_i^* \ev_i \ev_i^{\sf T} \right|.
\end{align}
\end{subequations}
\end{theorem}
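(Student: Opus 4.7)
The plan is to exploit the single-sensor sparsity constraint to reduce the matrix optimization to a scalar one, then apply the matrix determinant lemma and elementary calculus. Since $\Sigmam_{\av\av}\in\Sc_1$ has exactly one nonzero diagonal entry, positive semidefiniteness forces the entire off-diagonal rows/columns associated with that index to vanish as well; hence we may parameterize the feasible set as $\Sigmam_{\av\av} = v\,\ev_i\ev_i^{\sf T}$ with $v>0$ and $i\in\{1,\dots,m\}$. First, I would substitute this rank-one ansatz into the cost in~\eqref{1120_1}. By the matrix determinant lemma, $\left|\textbf{I}_m+\Mm v\,\ev_i\ev_i^{\sf T}\right| = 1+v\,\ev_i^{\sf T}\Mm\ev_i$ for any symmetric $\Mm$, and the trace terms reduce analogously. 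Introducing the shorthands $a_i\eqdef\ev_i^{\sf T}\Mm_1\ev_i$ and $b_i\eqdef\ev_i^{\sf T}\Mm_2\ev_i$, the optimization collapses to the scalar problem
\begin{equation*}
\min_{v>0}\; g_i(v) \;\eqdef\; \lambda\bigl(a_i v-\log(1+a_i v)\bigr)-b_i v+\log(1+b_i v).
\end{equation*}

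Next, I would solve the first-order condition for $g_i$. A direct computation gives
\begin{equation*}
g_i'(v)=\frac{\lambda a_i^2 v}{1+a_i v}-\frac{b_i^2 v}{1+b_i v},
\end{equation*}
so that the unique positive stationary point satisfies $\lambda a_i^2(1+b_i v)=b_i^2(1+a_i v)$. Solving this linear equation for $v$ yields precisely the closed form in~\eqref{singleattack_sol}a. To ensure this root is admissible (i.e.\ positive) and corresponds to a minimum, I would then carry out a sign analysis driven by the hypothesis~\eqref{singleattack_condition_lambda}: the right inequality $\lambda<b_i^2/a_i^2$ makes the numerator $b_i^2-\lambda a_i^2$ positive, while the left inequality $\lambda>b_i/a_i$ makes $\lambda a_i-b_i>0$, rendering the denominator positive; these two inequalities are compatible only when $b_i>a_i$, which is precisely what is also needed in the second-order check.

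For the second-order condition, I would compute
\begin{equation*}
g_i''(v)=\frac{\lambda a_i^2}{(1+a_i v)^2}-\frac{b_i^2}{(1+b_i v)^2},
\end{equation*}
and evaluate at the stationary point using the first-order identity to obtain $g_i''(v_i^*)>0$ iff $b_i>a_i$, which is guaranteed by~\eqref{singleattack_condition_lambda}. Together with the observation that $g_i(0)=0$ and $g_i(v)\to+\infty$ as $v\to\infty$ when $\lambda a_i>b_i$, this establishes that $v_i^*$ is the unique global minimizer on $(0,\infty)$ for each candidate index $i$. The proof then concludes by noting that, since the optimization over $\Sc_1$ is the minimum of $m$ independent scalar problems indexed by $i$, the overall optimum is attained at $i^*$ defined by~\eqref{singleattack_sol}b, with corresponding variance $v_{i^*}^*$.

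I expect the main obstacle to be bookkeeping rather than substance: one must be careful that the derived $v_i^*$ is indeed positive (not merely a stationary point), that the second derivative is positive at that point (so it is a minimum rather than a maximum or saddle), and that the interval specified in~\eqref{singleattack_condition_lambda} is non-vacuous, which tacitly requires $b_i>a_i>0$. All three requirements reduce to the same sign condition $\lambda a_i^2<b_i^2$ together with $\lambda a_i>b_i$, so the hypothesis~\eqref{singleattack_condition_lambda} does exactly the work needed. No additional machinery beyond the matrix determinant lemma and a one-variable calculus argument is required.
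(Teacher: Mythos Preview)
Your proposal is correct and follows essentially the same route as the paper: parameterize $\Sigmam_{\av\av}=v\,\ev_i\ev_i^{\sf T}$, reduce via the determinant lemma to the scalar function $g_i(v)=\lambda(a_iv-\log(1+a_iv))-b_iv+\log(1+b_iv)$, solve the inner problem over $v$ by first-order conditions, and then optimize over the index $i$. If anything, your second-order analysis is more careful than the paper's, which asserts ``convexity in $v$'' even though $g_i''(0)=\lambda a_i^2-b_i^2<0$ under the hypothesis; your argument that $v_i^*$ is the unique positive critical point together with $g_i(v)\to+\infty$ is what actually pins down the global minimum.
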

\begin{proof}
Note that $\Sigmam_{\av \av} = v\ev_i \ev_i^{\sf T}$. 
The single measurement attack in~\eqref{1120_1} is equivalent to
\begin{IEEEeqnarray}{ll}\label{20250320_1} 
\begin{aligned}
\min_{i} \min_{v \in \mathbb{R}^+} \lambda \ev_i^{\sf T} \Mm_1\ev_i v - \lambda \textnormal{log}(1+\ev_i^{\sf T} \Mm_1\ev_i v) - \ev_i^{\sf T} \Mm_2\ev_i v + \textnormal{log} (1+\ev_i^{\sf T} \Mm_2\ev_iv).
\end{aligned}
\end{IEEEeqnarray} 
The resulting single measurement attack can then be solved as an inner minimization and an outer minimization problem. {For a specific measurement $i$, the inner minimizer is actually the optimal attack variance $v$ to this measurement. After getting all the optimal variances for all $i \in \{1,2,...,m\}$, the optimal single measurement attack is obtained by comparing the performance of each measurement in terms of the cost function, which is characterized by the outer minimization. }
Since the derivative of the cost function in the inner minimizer $v$ is quadratic, the conditions such that cost function is convex in $v$ and the solution $v$ is positive real are given by
\begin{equation}\label{single_attack_condition}
\begin{aligned}
\begin{cases}
(\lambda \ev_i^{\sf T} \Mm_1\ev_i - \ev_i^{\sf T} \Mm_2\ev_i)(\ev_i^{\sf T} \Mm_1\ev_i)(\ev_i^{\sf T} \Mm_2\ev_i) > 0,\\
v =  \frac{ (\ev_i^{\sf T} \Mm_2\ev_i)^2 -\lambda (\ev_i^{\sf T} \Mm_1\ev_i)^2  }{ (\lambda (\ev_i^{\sf T} \Mm_1\ev_i) - \ev_i^{\sf T} \Mm_2\ev_i) (\ev_i^{\sf T} \Mm_1\ev_i)(\ev_i^{\sf T} \Mm_2\ev_i)   }  > 0.
\end{cases}
\end{aligned}
\end{equation}
These result in the condition in~\eqref{singleattack_condition_lambda} and the optimal inner minimizer in~\eqref{singleattack_sol}. The outer minimizer is then obtained by evaluating the optimal attack on each measurement $i$. This completes the proof.
\end{proof} 
\subsection{$k$-sparse Attacks}
The attack construction in the last section provides the optimal solution to the single measurement attack. To circumvent the combinatorial nature of seeking $k$ measurements simultaneously, we propose a greedy construction that sequentially selects one-at-a-time measurement at each step. {At each step, the problem is to select one measurement to attack, which has the same problem structure of single measurement attack as in the previous subsection. Hence, the idea of breaking the single measurement attack into inner and outer optimization in single measurement attack construction is adopted at each step of the sequential process in $k$-sparse attacks. } 

We assume that the entries of the attack vector are independent, i.e., 
$P_{\av}=\prod_{i=1}^k P_{\av_i}$, 
{where, all $P_{{\av}_i}$, $i \in \lbrace 1,2, \ldots, k\rbrace$, are Gaussian with zero mean and variance $v_i$, that is, $\av_i \sim {\cal N}(0, v_i)$.}
As a result, the set of covariance matrices given by~\eqref{att_k}, with $k < m$, is narrowed down to the set 
\be\label{20250105_1}
\bar{\Sc}_k \eqdef \bigcup_{\Kc}\!\left\{\Sm\!\in\! \Sc^m_+\!\!:\Sm\!=\!\!\sum_{i\in\Kc}v_i \ev_i\ev_i^{\sf T}\textnormal{with}\;v_i\!\in\!\mathbb{R}_+\!\right\},
\ee
where $|\Kc|_0=k $. The optimization problem in~\eqref{1114_01} boils down to the following problem:
\be
\label{eq:Gaussian_k_stealth_indep}
\min_{{\Sigmam}_{\av \av} \in \bar{\Sc}_k } f(\Sigmam_{\av \av}),
\ee
for which we sequentially update the index of one more nonzero entry and the corresponding value in the diagonal of $\Sigmam_{\av \av}$, {which corresponds to a new selected sensor measurement to attack and the attack variance to that measurement.}
Specifically, given the sparsity constraint in \eqref{20250105_1}, the construction is completed through $k$ steps. At each step, a new index of the measurement is added to $\Ac$. 
At step $i$, let $\Sigmam_i \in \Sc_{+}^m$ be the covariance matrix of the vector attack under construction, and $\mathcal{A}_{i}$ be the set of indices corresponding to the entries of the vector $\textnormal{diag}(\Sigmam_i)$ that are nonzero, {which is actually the set of measurements on the system that are under attacks until step $i$.}
That is,
\begin{equation}
\label{EqSetAi}
\mathcal{A}_i = \lbrace j \in \lbrace 1,2, \ldots, m \rbrace:  \ev_j^{\sf T}  \Sigmam_i \ev_j > 0 \rbrace.
\end{equation}
{The complement set of $\mathcal{A}_{i}$, denoted by $\mathcal{A}_{i-1}^{\sf c}$, is the set of measurements on the system that are not under attacks until step $i$, namely $\Ac^c_i = \{1, 2, \dots, m\}  \setminus \mathcal{A}_i$}.
Therefore, $\mathcal{A}_{i} \subseteq \lbrace 1,2, \ldots, m \rbrace$ and has a cardinality of $ i$ and $\mathcal{A}_{1} \subset \mathcal{A}_{2} \subset \ldots \subset \mathcal{A}_{k} \subset  \lbrace 1,2, \ldots, m \rbrace$.
Hence, the selection procedure can be written as
\begin{IEEEeqnarray}{ll} 
\begin{aligned} 
\label{eq:greed_step}
\Sigmam_{i} = \Sigmam_{i-1}+ v \ev_j \ev^{\sf T}_j, \ \ 
\Ac_i = \Ac_{i-1} \cup \{j\},
\end{aligned}
\end{IEEEeqnarray} 
where $v \in \mathbb{R}_+$ is the value of the new nonzero entry, which is the optimal variance to attack the measurement $j$, and $j \in \Ac_{i-1}^c$ is the identified measurement at step $i$.

Determining both $j \in \mathcal{A}_{i-1}^{\sf c}$ and $v \in \mathbb{R}_+$ at step $i$ as described in~\eqref{eq:greed_step} is based on the solution to the following optimization problem:
		\be
		\label{op:indep_sel}
		\min_{(j, v)\in\mathcal{A}_{i-1}^{\sf c}\times\mathbb{R}_+} f(\Sigmam_{i-1}+v \ev_j\ev^{\sf T}_j).
		\ee
{The structure of this optimization problem is the same as the single measurement attack in~\eqref{20250320_1}, which was treated as an inner and outer optimization in single measurement attack construction. Before we proceed with the decomposition into an inner optimization problem and an outer optimization problem, we propose the following lemma to enable the selection of both $j \in\mathcal{A}_{i-1}^{\sf c}$ and $v > 0$ at step $i$ based on a simpler but equivalent optimization problem to~\eqref{op:indep_sel}.} 
\begin{lemma}
\label{lm:cost_diff}
Let $\Sigmam_1\in \Sc^m_+$ and $\Sigmam_2\in \Sc^m_+$ be two matrices that satisfy $\Sigmam_2=\Sigmam_1+\bm{\Delta}$, with $\bm{\Delta}\in\mathbb{R}^{m\times m}$. Then, the cost function $f$ defined in \eqref{op:indep_sel} satisfies that 
\be\label{20250105_2}
f(\Sigmam_2) - f(\Sigmam_1) = g( \bm{\Delta}),
\ee
where
\be\label{Eqf}
\begin{aligned}
g( \Deltam) \eqdef &  \lambda   \left( \textnormal{tr} \left(   \Mm_1  \Deltam \right)  - \textnormal{log}  \left|\textbf{I} + (\Mm_1^{-1} + \Sigmam_1)^{-1}\Deltam \right|     \right)   -  \textnormal{tr} \left(  \Mm_2 \Deltam \right)  + \textnormal{log} \left|\textbf{I} +  ( \Mm_2^{-1} +  \Sigmam_1)^{-1} \Deltam  \right|,  \\
\end{aligned}
\ee
with $\Mm_1$ and $\Mm_2$ defined in~\eqref{eq_def_M1}.
\end{lemma}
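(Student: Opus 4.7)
The plan is to derive \eqref{20250105_2} by a direct term-by-term comparison of $f(\Sigmam_1+\bm{\Delta})$ with $f(\Sigmam_1)$, using the expression of $f$ appearing in \eqref{1114_01}. The trace contributions drop out immediately from linearity, since $\textnormal{tr}\bigl(\Mm_i(\Sigmam_1+\bm{\Delta})\bigr)-\textnormal{tr}(\Mm_i\Sigmam_1)=\textnormal{tr}(\Mm_i\bm{\Delta})$ for $i=1,2$, which already reproduces the $\lambda\,\textnormal{tr}(\Mm_1\bm{\Delta})$ and $-\textnormal{tr}(\Mm_2\bm{\Delta})$ pieces of $g(\bm{\Delta})$ as written in \eqref{Eqf}. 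The rest of the work is therefore entirely concentrated in the two log-determinant terms.

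For each $i\in\{1,2\}$, I would rewrite $\textbf{I}_m+\Mm_i(\Sigmam_1+\bm{\Delta}) = (\textbf{I}_m+\Mm_i\Sigmam_1)\bigl(\textbf{I}_m+(\textbf{I}_m+\Mm_i\Sigmam_1)^{-1}\Mm_i\bm{\Delta}\bigr)$, and apply the multiplicative property of determinants to obtain $\log\bigl|\textbf{I}_m+\Mm_i(\Sigmam_1+\bm{\Delta})\bigr|-\log|\textbf{I}_m+\Mm_i\Sigmam_1| = \log\bigl|\textbf{I}_m+(\textbf{I}_m+\Mm_i\Sigmam_1)^{-1}\Mm_i\bm{\Delta}\bigr|$. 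The step that matches this to the form in \eqref{Eqf} is the push-through identity $(\textbf{I}_m+\Mm_i\Sigmam_1)^{-1}\Mm_i=(\Mm_i^{-1}+\Sigmam_1)^{-1}$, which I would prove as a standalone substep by factoring $\textbf{I}_m+\Mm_i\Sigmam_1=\Mm_i(\Mm_i^{-1}+\Sigmam_1)$ and inverting both sides. The identity is well-posed because $\Mm_1=\Sigmam_{\yv\yv}^{-1}\succ \textbf{0}$ from \eqref{dis_y} and $\Mm_2\succ \textbf{0}$ follows from its construction in \eqref{eq_def_M1} via the Lyapunov-type sum, so both matrices are invertible on the relevant domain.

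Substituting the push-through identity into the preceding display collapses each log-determinant difference into exactly the term $\log|\textbf{I}_m+(\Mm_i^{-1}+\Sigmam_1)^{-1}\bm{\Delta}|$ appearing in \eqref{Eqf}. Assembling the trace contributions with their respective signs and weights then yields $g(\bm{\Delta})$ precisely, which proves \eqref{20250105_2}. The argument is essentially bookkeeping organized around a single matrix identity; the part most likely to require care, and therefore the main (and only) nontrivial obstacle, is the verification of that push-through identity and, implicitly, the justification that $\Mm_i^{-1}+\Sigmam_1$ is itself invertible for the $\Sigmam_1\in\Sc_+^m$ of interest, which holds because $\Mm_i^{-1}\succ\textbf{0}$ and $\Sigmam_1\succeq\textbf{0}$ imply $\Mm_i^{-1}+\Sigmam_1\succ\textbf{0}$.
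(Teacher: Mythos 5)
Your proposal is correct and is exactly the ``elementary calculation'' the paper leaves unstated: linearity of the trace plus the factorization $\textbf{I}_m+\Mm_i(\Sigmam_1+\bm{\Delta})=(\textbf{I}_m+\Mm_i\Sigmam_1)\bigl(\textbf{I}_m+(\textbf{I}_m+\Mm_i\Sigmam_1)^{-1}\Mm_i\bm{\Delta}\bigr)$ and the push-through identity $(\textbf{I}_m+\Mm_i\Sigmam_1)^{-1}\Mm_i=(\Mm_i^{-1}+\Sigmam_1)^{-1}$. The only caveat is that $\Mm_2$ is in general only positive semidefinite (strict definiteness needs the $n\times m$ gain $\Lm$ to have full column rank), but since the lemma statement and the rest of the paper already use $\Mm_2^{-1}$, its invertibility is implicitly assumed and your argument matches the paper's setting.
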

\begin{proof}
The conclusion is obtained via elementary calculation. 
\end{proof}
The above lemma sheds light on a simpler optimization problem than~\eqref{op:indep_sel}. {The arguments in~\eqref{op:indep_sel} are $j$ and $v$, which translate into $\Deltam$ in~\eqref{20250105_2}. At the proposed sequential process, at each step, the attacker minimizes the additional cost to attack one more measurement, that is, $f(\Sigmam_2) - f(\Sigmam_1)$. From this lemma, minimizing additional cost is equivalent to minimizing $g(\Deltam)$ in~\eqref{Eqf} for a $\Deltam \in \mathbb{R}^{m \times m}$ of the form $\Deltam = v \ev_j\ev_j^{\sf T}$. Hence, at each step $i$, the attacker needs to solve the following minimization problem:}
{\begin{IEEEeqnarray}{ll}\label{op:indep_sel2}
	\begin{aligned}
		&\min_{(j,v)\in\mathcal{A}_{i-1}^{\sf c} \times\mathbb{R}_+} g(  v \ev_j\ev^{\sf T}_j) \\
		= &\min_{(j,v)\in\mathcal{A}_{i-1}^{\sf c} \times\mathbb{R}_+}     \lambda   \left( \textnormal{tr} \left(   \Mm_1 v\ev_j\ev_j^{\sf T} \right)  - \textnormal{log}  \left|\textbf{I} + (\Mm_1^{-1} + \Sigmam_1)^{-1}v\ev_j\ev_j^{\sf T} \right|     \right)   -  \textnormal{tr} \left(  \Mm_2 v\ev_j\ev_j^{\sf T} \right)  \\
		& \ \ \ \ \ \ \ \ \ \ \ \ \ \ \ \ \ \ \ \  + \textnormal{log} \left|\textbf{I} +  ( \Mm_2^{-1} +  \Sigmam_1)^{-1}v\ev_j\ev_j^{\sf T}  \right|,
\end{aligned}
\end{IEEEeqnarray}}
{The minimization problem at step $i$ is essentially determining which measurement $j \in {\cal A}_{i-1}^c$ to compromise and the corresponding attack variance $v \in \mathbb{R}_+$, which is equivalent to the following two-layer minimization problem:}
\begin{IEEEeqnarray}{ll}\label{20211220_1} 
	\begin{aligned}
	   \min_{j \in\mathcal{A}_{i-1}^{\sf c}}\min_{ v \in \mathbb{R}_+} & \ \ \lambda   \left( v   \ev_j^{\sf T}   \Mm_1  \ev_j - \textnormal{log} (1+v \ev_j^{\sf T}(\Mm_1^{-1} +  \Sigmam_1)^{-1} \ev_j  )      \right)   - v\ev_j^{\sf T} \Mm_2  \ev_j  \\
	  & \ + \textnormal{log}  (1+ v\ev_j^{\sf T}(\Mm_2^{-1} +  \Sigmam_1)^{-1} \ev_j  ).
	\end{aligned}
\end{IEEEeqnarray}	
{In other words, at step $i$, for each measurement $j \in \Ac_i^c$, the attacker seeks for the optimal attack variance $v$ by the inner minimization, and then determines the best measurement to attack by comparing the performance of each measurement with their own optimal attack variance obtained in inner minimization. This strategy is inherited from the single measurement attacks case.}
	
It can be shown that the first-order derivative of the cost function in~\eqref{20211220_1} with respect to $v$ is quadratic multiplied by a strictly positive factor. For a measurement $j$, we need to make assumptions on the weighting parameter $\lambda$ to obtain the optimal solution of $v$. In this setting, the conditions on the weighting parameter is not uniform in $j$. In fact, for each measurement $j$, the conditions on $\lambda$ depends on $\ev_j^{\sf T}   \Mm_1  \ev_j, \ev_j^{\sf T}(\Mm_1^{-1} +  \Sigmam_1)^{-1} \ev_j, \ev_j^{\sf T} \Mm_2  \ev_j$ and $\ev_j^{\sf T}(\Mm_2^{-1} +  \Sigmam_1)^{-1} \ev_j$ from~\eqref{20211220_1}.
Let us denote
{\begin{subequations}\label{abcd} 
	\begin{align} 
		a_j \eqdef \ev_{j}^{\sf T}\Mm_1 \ev_{j}, \ b_j \eqdef \ev_{j}^{\sf T}\Mm_2 \ev_{j}, \ c_j \eqdef  \ev_{j}^{\sf T}(\Mm_1^{-1} +\Sigmam_1)^{-1}\ev_{j}, \ \textnormal{and } \
		d_j \eqdef \ev_{j}^{\sf T}(\Mm_2^{-1} +\Sigmam_1)^{-1}\ev_{j}.
	\end{align}
\end{subequations} 
After elementary operations, to evaluate given the first-order derivative of the cost function in~\eqref{20211220_1}, we find that the convexity with respect to $v \in \mathbb{R}_+$ holds if and only if $\lambda$ satisfies the following inequalities, for all $j \in \Ac^c_{i-1}$:}
{\begin{equation}\label{lambda_constraints}
	\begin{aligned}
		\begin{cases}
		(1)&	\lambda a_j - b_j > 0, \\
		(2)&	((\lambda a_j - b_j)(c_j+d_j) + c_jd_j(1-\lambda))^2 - 4c_jd_j(\lambda a_j - b_j)(d_j - \lambda c_j) > 0,\\
		(3)&	(b_j-\lambda a_j)(c_j+d_j) + c_jd_j(\lambda -1) \\
            &+ \sqrt{((\lambda a_j - b_j)(c_j+d_j) + c_jd_j(1-\lambda))^2 - 4c_jd_j(\lambda a_j - b_j)(d_j - \lambda c_j)} > 0.
		\end{cases}
	\end{aligned}
\end{equation}}
With the condition in~\eqref{lambda_constraints}, the optimal attack variance to measurement $j$ is denoted by
\begin{equation}\label{v_star}
\begin{aligned}
v_j^* = &\frac{(b_j-\lambda a_j)(c_j+d_j)+c_j d_j(\lambda -1)}{2 c_j d_j(\lambda a_j -b_j) } \\
&+\frac{\sqrt{\left((\lambda a_j - b_j)(c_j+d_j)+c_jd_j(1-\lambda)\right)^2 - 4 c_j d_j(\lambda a_j -b_j)(d_j-\lambda c_j)}}{2 c_j d_j(\lambda a_j -b_j) }.
\end{aligned}
\end{equation}
We remark that the optimal attack variance for the measurement $j$ actually depends only on the weighting parameter $\lambda$ in~\eqref{lambda_constraints}, since $a_j, b_j, c_j, d_j$ are constants. The weighting parameter $\lambda$ determines the convexity and optimality for the attack variance. {This is expected, since given the attacks on the system at this step, for a specific measurement $j \in {\cal A}_{i-1}^c$, all the parameters in~\eqref{20211220_1} associated with the system and previous attacks are known, and hence the attack variance $v_j$ is the only variable. Given the optimal attack variance for measurement $j$ as in~\eqref{v_star}, the following theorem provides the strategy to select the optimal measurement $j^*$ to attack. } 
\begin{theorem}\label{th:indep_sel}
Let $k$ satisfy $0 < k < m$, $i$ satisfy $0<i<k$, and $\left(j^{\star}, v^{\star} \right) \in \mathcal{A}_{i-1}^{\sf c}\times\mathbb{R}_+$ be
the solution to the optimization problem in~\eqref{op:indep_sel} at step $i$. Let $a_j, b_j, c_j, d_j$ be as defined in~\eqref{abcd}.
Assume that the weighting parameter satisfies~\eqref{lambda_constraints} for all $j \in \Ac^c_i$. Then, at step $i$ in the sequential process, the optimal measurement to attack is 
\be\label{j_star} 
j^* = \argmin_{j \in \Ac_{i-1}^c} f(\Sigmam_{i-1}+v_j^* \ev_{j}\ev^{\sf T}_{j})
\ee
where the optimal attack variance for all $j \in \Ac_{i-1}^c$ is given by~\eqref{v_star}.
\end{theorem}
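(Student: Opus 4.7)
My plan is to reduce the joint minimization in~\eqref{op:indep_sel} to a pair of nested scalar problems by exploiting Lemma~\ref{lm:cost_diff} together with the rank-one structure of $\ev_j\ev_j^{\sf T}$, solve the inner one in closed form, and pose the outer one as a finite comparison over $\mathcal{A}_{i-1}^{\sf c}$.

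First, I would apply Lemma~\ref{lm:cost_diff} with $\Sigmam_1 = \Sigmam_{i-1}$ and $\Deltam = v\ev_j\ev_j^{\sf T}$ to write $f(\Sigmam_{i-1}+v\ev_j\ev_j^{\sf T}) = f(\Sigmam_{i-1}) + g(v\ev_j\ev_j^{\sf T})$. Since $f(\Sigmam_{i-1})$ does not depend on $(j,v)$, problem~\eqref{op:indep_sel} is equivalent to $\min_{j \in \mathcal{A}_{i-1}^{\sf c}}\min_{v \in \mathbb{R}_+} g(v\ev_j\ev_j^{\sf T})$. Applying the matrix determinant lemma to each log-determinant term in~\eqref{Eqf} and using $\textnormal{tr}(\Mm v\ev_j\ev_j^{\sf T}) = v\ev_j^{\sf T}\Mm\ev_j$ reduces the inner objective to the scalar form in~\eqref{20211220_1} with coefficients $a_j, b_j, c_j, d_j$ as defined in~\eqref{abcd}.

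Next, for each fixed $j\in \mathcal{A}_{i-1}^{\sf c}$ I would differentiate the inner objective with respect to $v$. After clearing the common positive denominator $(1+vc_j)(1+vd_j)$, the stationarity condition becomes a quadratic in $v$. Condition~(1) in~\eqref{lambda_constraints} ensures that the leading coefficient of this quadratic is positive, condition~(2) ensures a positive discriminant so that two real roots exist, and condition~(3) guarantees that the larger root lies in $\mathbb{R}_+$. Together, these conditions also yield strict convexity of the inner cost in $v$ on $\mathbb{R}_+$, so the admissible root is the unique global minimizer, and a direct application of the quadratic formula reproduces precisely~\eqref{v_star}. Denoting this minimizer by $v_j^{\star}$, the outer problem becomes $j^{\star} = \argmin_{j \in \mathcal{A}_{i-1}^{\sf c}} g(v_j^{\star}\ev_j\ev_j^{\sf T})$, which by the equivalence established above coincides with~\eqref{j_star}.

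The main obstacle I anticipate is the bookkeeping needed to verify that the three inequalities in~\eqref{lambda_constraints} translate exactly into strict convexity of the inner cost on $\mathbb{R}_+$, real solvability of the stationarity quadratic, and positivity of the selected stationary point (rather than merely nonnegativity), and that after clearing denominators the coefficients recombine into the exact closed form in~\eqref{v_star} with the correct sign chosen on the square root. The remainder of the argument is routine once the Sherman--Morrison-type simplifications available for the rank-one perturbation $\Deltam = v\ev_j\ev_j^{\sf T}$ have been deployed.
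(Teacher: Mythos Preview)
Your proposal is correct and follows essentially the same route as the paper: the paper develops precisely this machinery (Lemma~\ref{lm:cost_diff}, the rank-one reduction to the scalar objective~\eqref{20211220_1}, the quadratic stationarity analysis under~\eqref{lambda_constraints}, and the closed form~\eqref{v_star}) in the discussion preceding the theorem, and its proof reads simply ``The proof follows by the previous discussion.'' Your write-up in fact spells out more of the bookkeeping than the paper does.
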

\begin{proof}
The proof follows by the previous discussion.
\end{proof}

Based on the sequential process and Theorem~\ref{th:indep_sel}, the proposed $k$-sparse attack construction is described in Algorithm~\ref{alg:greedy}. 
\begin{algorithm}[ht]
	\caption{$k$-sparse attack construction}
	\label{alg:greedy}
	\KwIn{System model $(\Am, \Bm, \Cm)$; observer gain $\Lm$; controller gain $\Km$;
	       system noise $\Sigmam_{\nv \nv}$; disturbance $\Sigmam_{\dv \dv}$;
	       sparsity constraint $k$.}
	\KwOut{$\bm{\Sigma}_{\av \av} \in \bar{\Sc}_k$}
	Initialize $\Ac_0 \gets \emptyset$, $\Sigmam_0 \gets \mathbf{0}$\; \\
	\For{$i = 1$ \KwTo $k$}{
		\For {$j \in \Ac_{i-1}^{\sf c}$}{
		Compute the optimal $v_{j}$ as in~\eqref{v_star}\; \\
            Compute the cost function $f$ when attacking measurement $j$ with attack variance $v_j$ defined in~\eqref{op:indep_sel} \;
		}
		Compute $j^{\star}$ in~\eqref{j_star}\;\\
		Set $\Ac_j \gets \Ac_{i-1} \cup \{ j^{\star} \}$\;\\
		Set $\Sigmam_i \gets \sum_{j \in \Ac_j} v_{j} \ev_j \ev_j^{\sf T}$\;
	}
	$\bm{\Sigma}_{\av \av} \gets \Sigmam_k$\;
\end{algorithm}

\section{Numerical Case Study and Vulnerability Analysis}\label{sec_numerical_results}
In this section, we present the simulation results on a chemical process consisting of two continuously stirred tank reactors (CSTRs) in series. The reactant A is fed into the reactors $j \in \{1,2\}$, with inlet concentrations $C_{Aj0}$, inlet temperature $T_{j0}$, and flow rate $F_{j0}$. The heating jacket is installed around reactor $j$ with a manipulable heating / cooling rate $Q_j$.  
The dynamics of this two-CSTR-in-series process is described as follows~\cite{CW_CERD_2021}:
\begin{subequations} 
\begin{align} 
  \frac{d C_{A1}}{d t} = & \frac{F_{10}}{V_{L_1}} (C_{A10} - C_{A1}) -k_0 e^{\frac{-E}{RT_1}}C_{A1}^2, \\
  \frac{d T_1}{d t} = & \frac{F_{10}}{V_{L1}}(T_{10} - T_1) - \frac{ \Delta H}{\rho_L C_p}k_0 e^{\frac{-E}{RT_1}}C_{A1}^2 + \frac{Q_1}{\rho_L C_p V_{L_1}},\\
  \frac{d C_{A2}}{d t} = & \frac{F_{20}}{V_{L_2}} C_{A20} + \frac{F_{10}}{V_{L_2}}  C_{A1 } - \frac{F_{10}+F_{20}}{V_{L_2}} C_{A2} -k_0 e^{\frac{-E}{RT_2}}C_{A2}^2, \\
  \frac{d T_2}{d t} = & \frac{F_{20}}{V_{L2}} T_{20} + \frac{F_{10}}{V_{L2}} T_1  - \frac{F_{10}+F_{20}}{V_{L_2}} T_{2} - \frac{ \Delta H}{\rho_L C_p}k_0 e^{\frac{-E}{RT_2}}C_{A2}^2 + \frac{Q_2}{\rho_L C_p V_{L_2}},
\end{align}
\end{subequations}
where $C_{Aj}$, $V_{L_j}$, and $T_j$ are the concentrations in reactor $j$, volumes of the reacting liquid, temperature in the first and the second reactor, respectively. 
Consequently, the states of the chemical process is $\xv \eqdef (C_{A1} -C_{A1s}, T_1 - T_{1s}, C_{A2} - C_{A2s}, T_2 - T_{2s})^{\sf T}$, where $C_{Ajs}$ and $T_{js}$ are the stationary concentration and temperature in reactor $j$, respectively.
The control inputs are designed as the inlet concentration of the species and the heat rate supplied by the heating jacket, that is, $\uv \eqdef ( C_{A10} - C_{A1s}, Q_1 - Q_{1s}, C_{A20} - C_{A2s}, Q_2 - Q_{2s})^{\sf T}$, where $ C_{Ajs}$ and $ Q_{js}$ are the feed concentrations and heat rates at the steady state. We use the linearized dynamics of this two-CSTR-in-series process with corresponding parameters listed in the following table, and adopt the LQR and Kalman filter for its optimal control and state observation, respectively. 
\begin{table}[h!]
\centering
\begin{tabular}{  |p{5cm}|p{5cm} |  } 
\hline
\multicolumn{2}{|c|}{Table 1 - Parameters of the two-CSTR-in-series.} \\
\hline
$T_{10} = \SI{300}{\K} $ & $T_{20} = \SI{300}{\K}$\\
$F_{10} = \SI{5}{m^3/h}$  & $F_{20} = \SI{5}{m^3 /h}$\\
$V_{L_1} = \SI{1}{m^3}$ & $V_{L_2} = \SI{1}{m^3}$\\
$T_{1s} = \SI{401.9}{K}$ & $T_{2s} = \SI{401.9}{K}$\\
$C_{A1s} = \SI{1.954}{kmol/m^3}$ & $C_{A2s} = \SI{1.954}{kmol/m^3}$\\
$C_{A10s} = \SI{4e3}{mol/m^3} $& $C_{A20s} = \SI{4e3}{mol/m^3}$\\
$Q_{1s} = \SI{0.0}{kJ/h}$& $Q_{2s} = \SI{0.0}{kJ/h}$\\
$k_0 = \SI{8.46e6}{m^3/kmol/h}$ & $\Delta H = \SI{-1.15e4}{kJ/kmol}$ \\
$C_p = \SI{0.231}{kJ/kg/K}$& $R = \SI{8.314}{kJ/kmol/K}$ \\
$\rho_L = \SI{1e3}{kg/m^3}$&$ E = \SI{5e4}{kJ/kmol }$ \\
\hline
\end{tabular}
\label{table} 
\end{table}
\subsection{Performance of Single Measurement Attacks}
Fig.~\ref{fig:single_attack_metric} depicts the performance of single measurement attacks at varying weighting parameter $\lambda$. 
{To evaluate the attack performance under both complete and incomplete knowledge of second-order statistics, we present the results in Fig.~\ref{fig:single_attack_metric} where the first column of subfigures corresponds to the results with full knowledge, while the second column corresponds to the results with incomplete knowledge. 
In attack construction with incomplete knowledge, we adopt the same system setting as in the full-knowledge scenario, except for adding a mismatch $\Delta\Sigmam_{\xv\xv}$ to the covariance matrix of state variables $\Sigmam_{\xv \xv} = \epsilonv \epsilonv^\top$ (where the components of $\epsilonv$ has a uniform distribution). 
We examine the performance in terms of the distraction of stationary distributions of states and state estimates ($D(P_{\xiv_{\av}} \|P_{\xiv})$) and that of measurements $D(P_{\yv_a} \|P_{\yv})$. We also include the KL divergence between the distributions of state estimates with and without attacks, i.e., $D(P_{\hat{\xv}_a} \| P_{\hat{\xv}})$.} 
{In general, a larger $\lambda$ leads to a smaller KL divergence $D(P_{\yv_a} \| P_{\yv})$, indicating a lower probability of attack detection, as well as a smaller $D(P_{\xiv_{\av}} \| P_{\xiv})$ that corresponds to a less deviation from the stationary distribution.}

\begin{figure}[ht]
	\centering
	\begin{subfigure}{0.45\textwidth}
		\centering
		\includegraphics[width=\textwidth]{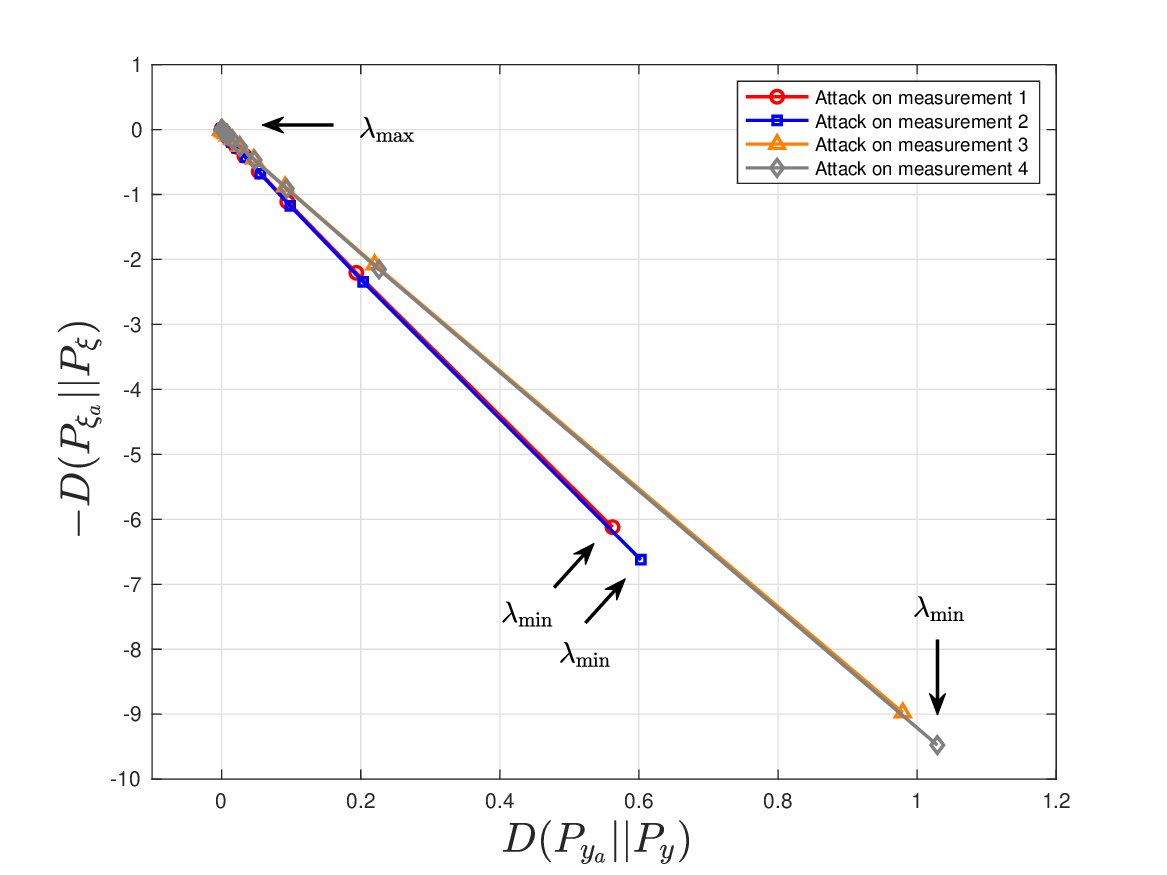} 
		\caption{Pareto curve of single measurement attacks, $-D(P_{\xiv_a} \| P_{\xiv} )$ vs $D(P_{\yv_a} \| P_{\yv} )$  with full knowledge.}
		\label{fig:single_paretocurve}
	\end{subfigure}
	\begin{subfigure}{0.45\textwidth}
		\centering
		\includegraphics[width=\textwidth]{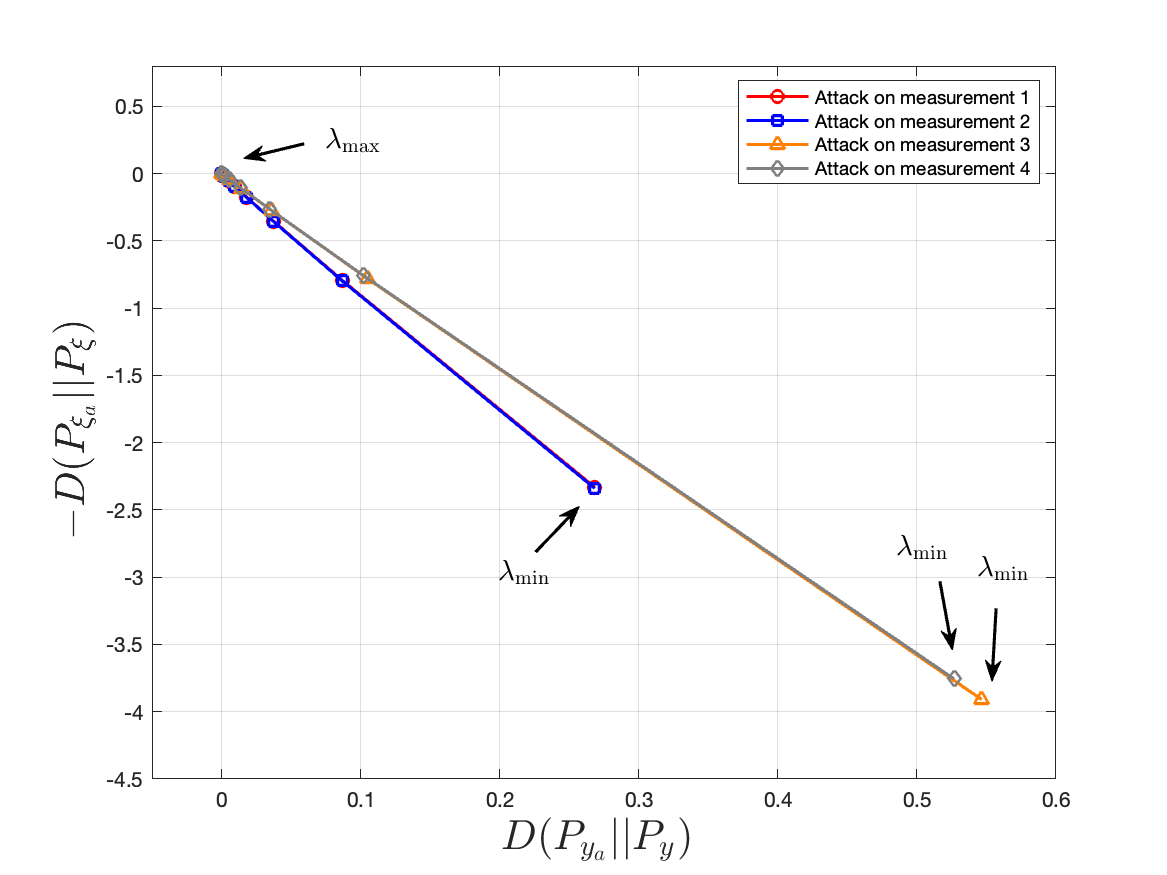} 
		\caption{Pareto curve of single measurement attacks, $-D(P_{\xiv_a} \| P_{\xiv} )$ vs $D(P_{\yv_a} \| P_{\yv} )$ with incomplete knowledge.}\label{fig:single_paretocurve2}
	\end{subfigure}
	\hfill
	\begin{subfigure}{0.45\textwidth}
		\centering
		\includegraphics[width=\textwidth]{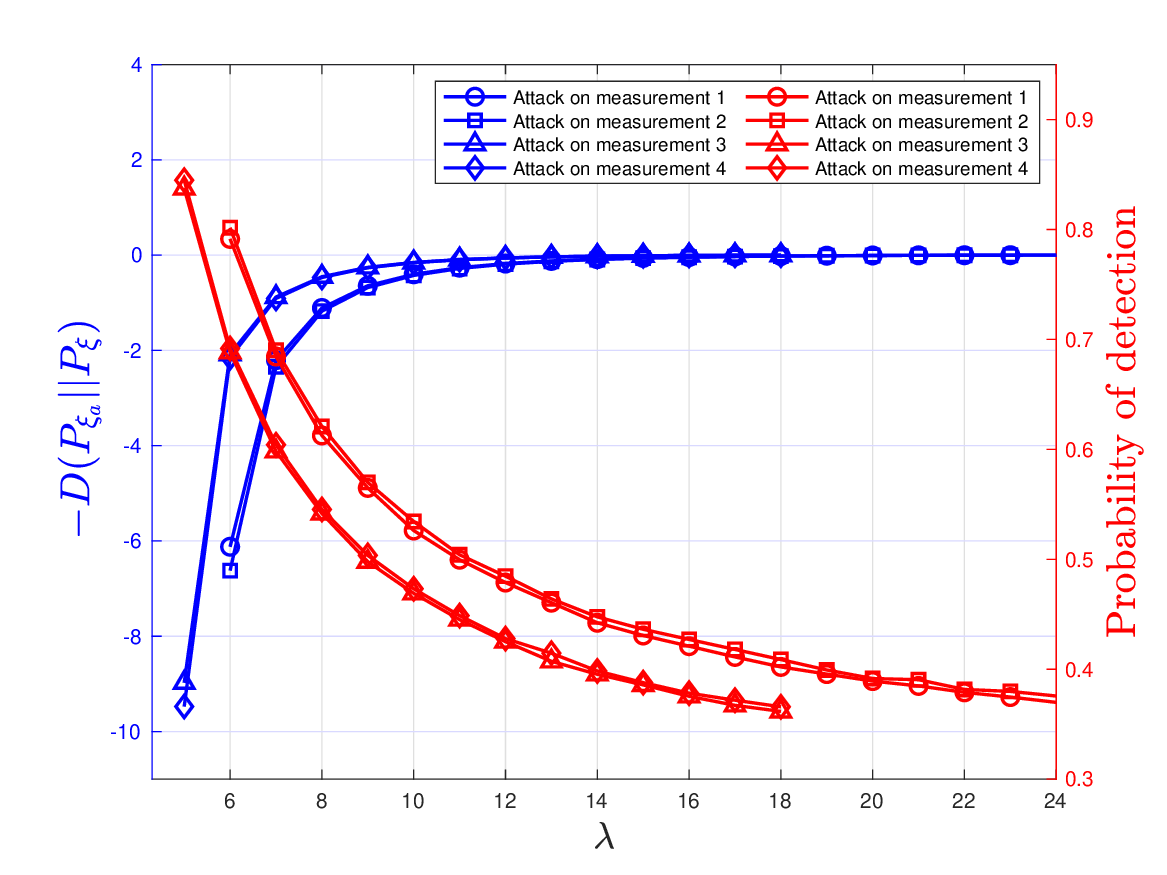} 
		\caption{Attacks in terms of $D(P_{\xiv_a} \| P_{\xiv} )$ versus probability of detection with full knowledge.}
		\label{fig:single_D1_Prob}
	\end{subfigure}
	\begin{subfigure}{0.45\textwidth}
		\centering
		\includegraphics[width=\textwidth]{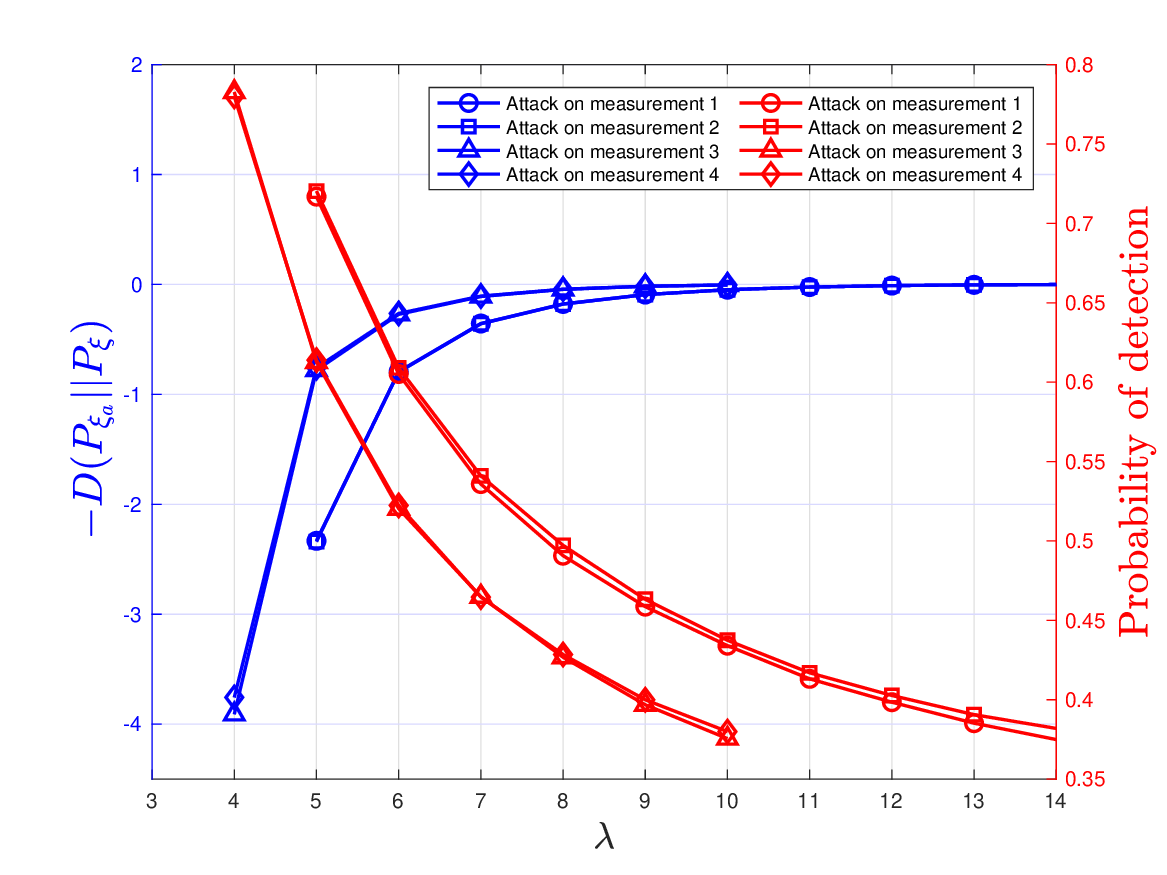} 
		\caption{Attacks in terms of $D(P_{\xiv_a} \| P_{\xiv} )$ versus probability of detection with incomplete knowledge.}\label{fig:incomplete_D1andP}
	\end{subfigure}
	\begin{subfigure}{0.45\textwidth}
		\centering
		\includegraphics[width=\textwidth]{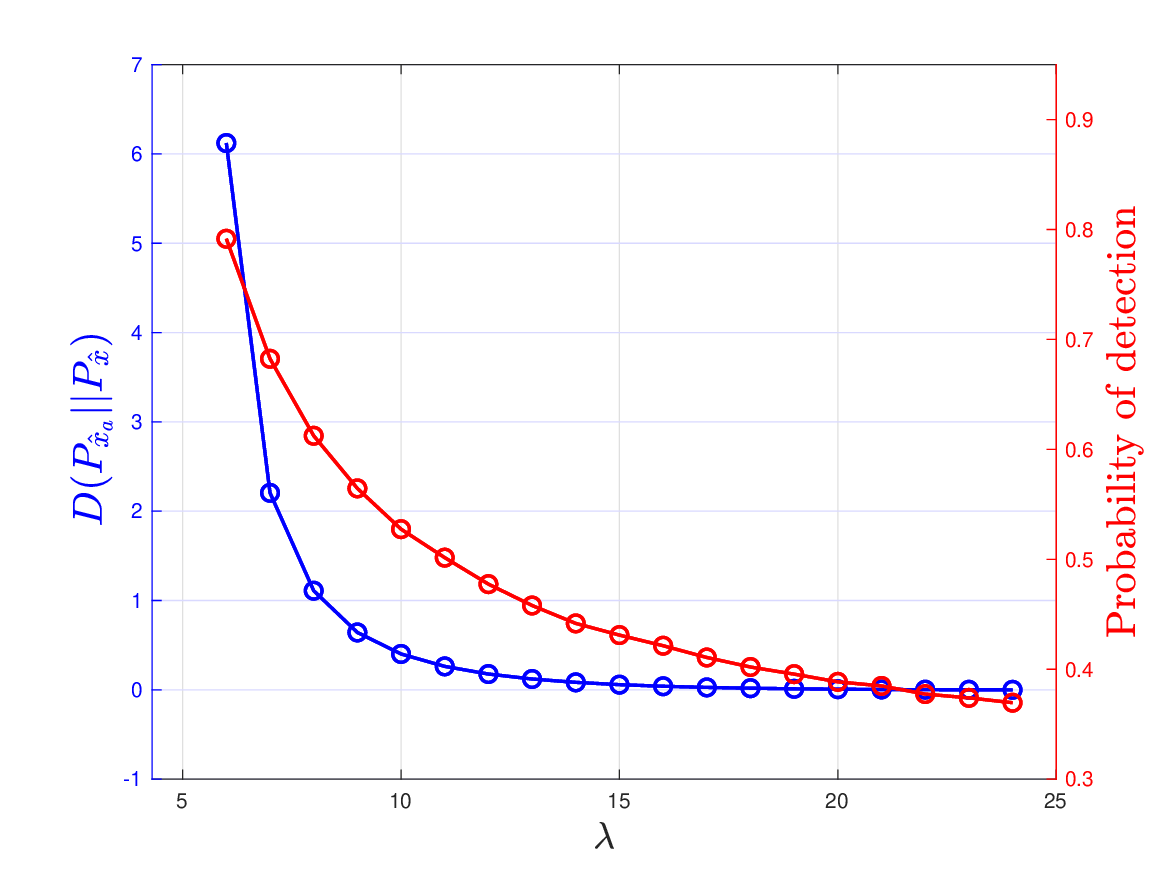} 
		\caption{Pareto curve of single measurement attacks, $ D(P_{\hat{\xv}_a} \| P_{\xv} )$ v.s. probability of detection with full knowledge.}
		\label{fig:single_lambda}
	\end{subfigure}
	\centering
	\begin{subfigure}{0.45\textwidth}
		\includegraphics[width=\textwidth]{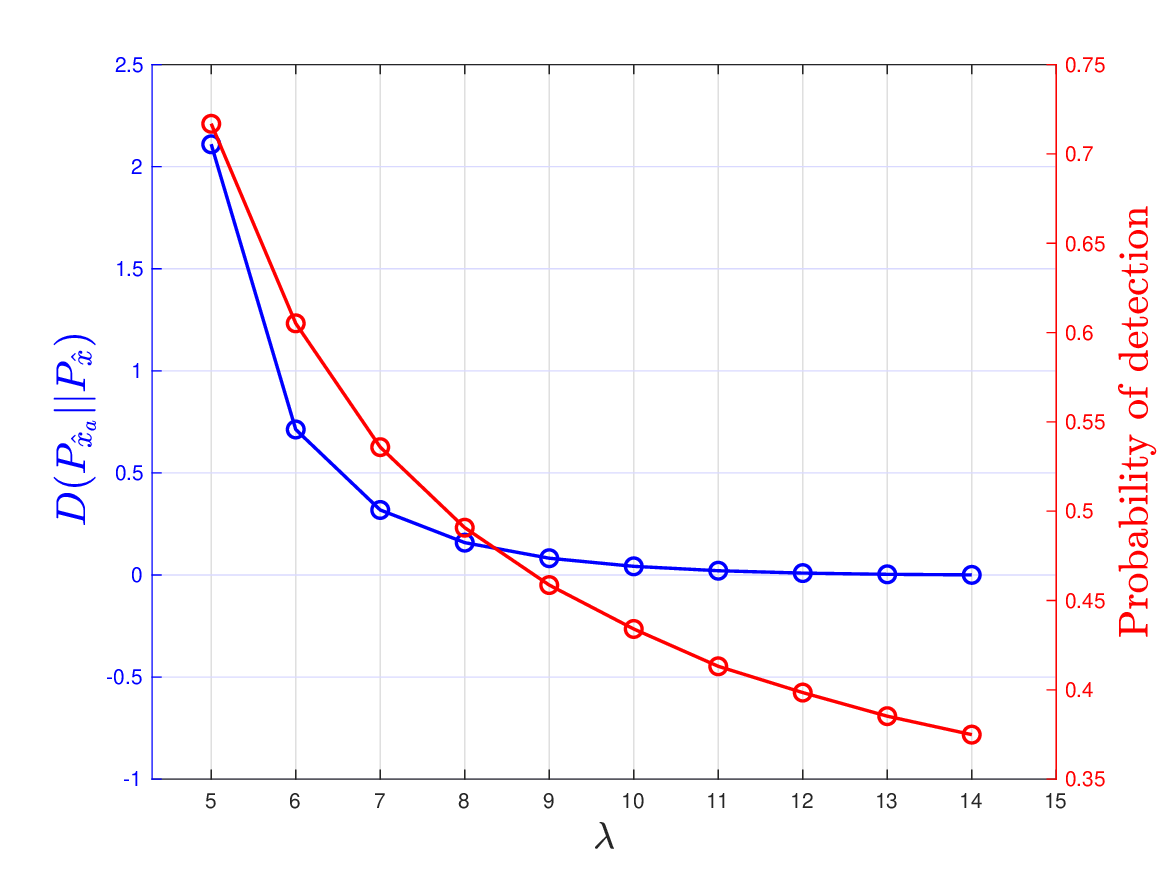} 
		\caption{Pareto curve of single measurement attacks, $ D(P_{\hat{\xv}_a} \| P_{\xv} )$ v.s. probability of detection with incomplete knowledge.}\label{fig:incomplete_DamageandP}
	\end{subfigure}
	\caption{Attack performance of single measurement attacks in terms of $\lambda$ with full knowledge and incomplete knowledge of second-order statistics. }
	\label{fig:single_attack_metric}
\end{figure}

{Fig.~\ref{fig:single_paretocurve} depicts the performance of single measurement attacks with full knowledge in terms of the tradeoff between KL divergence of $(\xv, \hat{\xv})$ distributions and KL divergence of the measurement distributions. As expected, for all measurements, both $D(P_{\yv_a} \|P_{\yv})$ and $D(P_{\xiv_{\av}} \|P_{\xiv})$ decrease monotonically when the weighting parameter $\lambda$ increases. 
The comparison between full and incomplete knowledge reveals that with perfect knowledge, the attacker can achieve a wider range of $D(P_{\yv_a} \| P_{\yv})$ and $D(P_{\xiv_a} \| P_{\xiv})$ values. Moreover, with the same probability of detection, the disruption caused by the attack with full knowledge is significantly greater.
Overall, the Pareto curves in Fig. \ref{fig:single_paretocurve} span a wider range of both $D(P_{\yv_a} \| P_{\yv})$ and $D(P_{\xiv_a} \| P_{\xiv})$, offering more flexibility to prioritize either attack disruption or stealthiness. 
In contrast, the attack constructions under incomplete knowledge of second-order statistics, shown in Fig.~\ref{fig:single_paretocurve2}, are confined to a much narrower region in both metrics. 
Furthermore, when attacking the same measurement and achieving the same $D(P_{\yv_a} \| P_{\yv})$, the attacks with incomplete knowledge result in smaller $D(P_{\xiv_a} \| P_{\xiv})$, indicating a reduced ability to alter the steady-state distribution. The same trend is evident in the comparison between Fig.~\ref{fig:single_D1_Prob} and Fig.~\ref{fig:incomplete_D1andP}.}

In Fig.~\ref{fig:single_D1_Prob} and Fig.~\ref{fig:incomplete_D1andP}, we depict the KL divergence between the distributions of the states and state estimates under attacks and without attacks, and the probability of detection in terms of different choice of $\lambda$. The probability of detection results in this section are obtained by averaging $2\times 10^4$ realizations of the measurements.
For different measurements, $j = 1,2,3,4$, the conditions on the weighting parameters $\lambda$ are different as in~\eqref{singleattack_condition_lambda}, where the inequality depends on the parameters $a_j, b_j, c_j$, and $d_j$, i.e., on the measurement itself. It is observed that increasing the value of $\lambda$ to $18$ yields a small probability of detection. Actually, the probability of detection monotonically decreases until $\lambda$ approaches its supremum. With increasing $\lambda$, the term $-D(P_{\xiv_{\av}} \|P_{\xiv})$ increases monotonically, which in turn indicates that the KL divergence between the distributions of $(\xv, \hat{\xv})$ under attacks and without attacks decreases. This is expected since when the attacker is more conservative in probability of detection, i.e., decides to give smaller thresholds to the probability of detection, the deviations that the attack can introduce reduces. 
{As expected, in both the full knowledge case in Fig.~\ref{fig:single_D1_Prob} and the incomplete knowledge case in Fig.~\ref{fig:incomplete_D1andP}, both the probability of detection and the disruption captured by $D(P_{\xiv_a} \| P_{\xiv})$ decrease as $\lambda$ increases. However, under incomplete knowledge, for the same probability of detection, the disruption is significantly smaller compared to that in Fig.\ref{fig:single_D1_Prob}. This is consistent with the intuition that better knowledge enables the attacker to alter the distribution of steady state under same probability of detection.}

{A similar pattern is observed in Fig.~\ref{fig:single_lambda} and Fig.~\ref{fig:incomplete_DamageandP}, where attack performance is compared in terms of $D(P_{\hat{\xv}_a} \| P_{\hat{\xv}})$ and the probability of detection. Specifically, for the same probability of detection, the disruption captured by $D(P_{\hat{\xv}_a} \| P{\hat{\xv}})$ is greater in the full knowledge case as shown in Fig.~\ref{fig:single_lambda} than in the incomplete knowledge case in Fig.~\ref{fig:incomplete_DamageandP}.} In Fig.~\ref{fig:single_lambda}, we specifically present the KL divergence between the distributions of the state estimates $\hat{\xv}$ under attacks and without attacks. From a practical point of view, the attacker can be interested to see the estimates before and after attacks, i.e., $\hat{\xv}_a$ and $\hat{\xv}$, as this is a direct metric of the effect of the attacks. {When $\lambda$ increases, both $D(P_{\hat{\xv}_a}\|P_{\hat{\xv}})$ and the probability of detection decrease. This indicates that from the attack detection perspective, the attacker benefits from increasing $\lambda$ at the cost of reducing the deviation of the stationary distribution. We also observed that both the decrease shrink with larger $\lambda$, which means that the effect of changing $\lambda$ becomes smaller when $\lambda$ is large.
As we can see, there is a significant decrease on $D(P_{\hat{\xv}_a}\|P_{\hat{\xv}})$ and probability of detection in when $\lambda$ increases from $6$ to $7$. However, when $\lambda$ is larger than $11$, the decrease of $D(P_{\hat{\xv}_a}\|P_{\hat{\xv}})$ is almost neglectable, while the attacker still obtains a considerably stable decrease on the resulting probability of detection.}
 
To illustrate the stationary distributions of the chemical process, we depict the distributions of the concentration and temperature in tank $1$ and tank $2$, respectively, when attacking the first measurement with $\lambda = 8$. Fig.~\ref{fig:single_steady_dist_tank1_attack} and Fig.~\ref{fig:single_steady_dist_tank1_noattack} depict the difference of these stationary distributions with attacks and without attacks in tank $1$, and Fig.~\ref{fig:single_steady_dist_tank2_attack} and Fig.~\ref{fig:single_steady_dist_tank2_noattack} depict for tank $2$. As shown in the figures, both the distribution of concentration and temperature significantly differ.

\begin{figure}[ht]
	\centering
	\begin{subfigure}{0.45\textwidth}
		\centering
		\includegraphics[width=\textwidth]{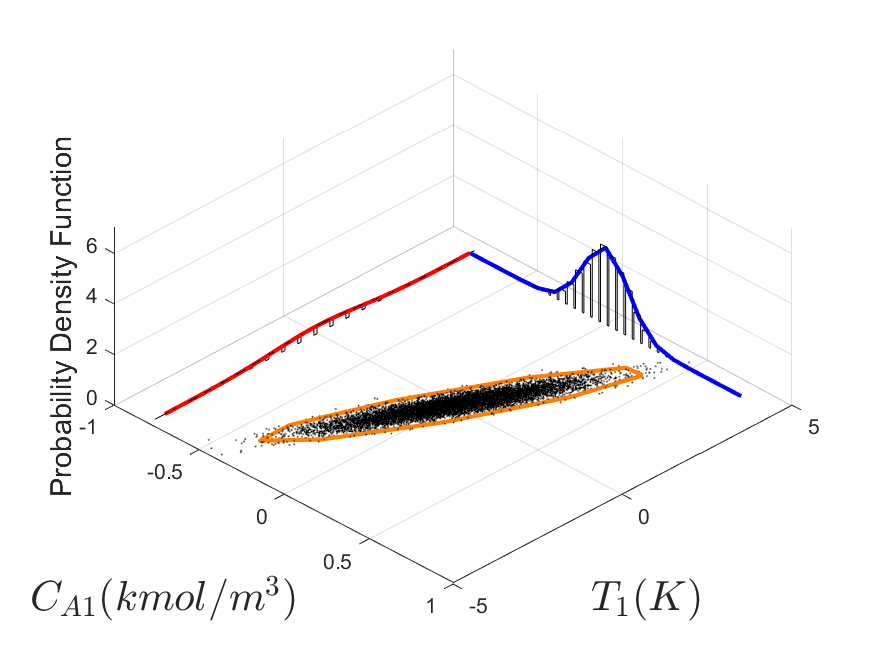} 
		\caption{Stationary distributions of $C_{A1}$ and $T_{A1}$ in tank $1$ with single measurement attacks when $\lambda = 8$. }
		\label{fig:single_steady_dist_tank1_attack}
	\end{subfigure}
	\hfill
	\begin{subfigure}{0.45\textwidth}
	\centering
	\includegraphics[width=\textwidth]{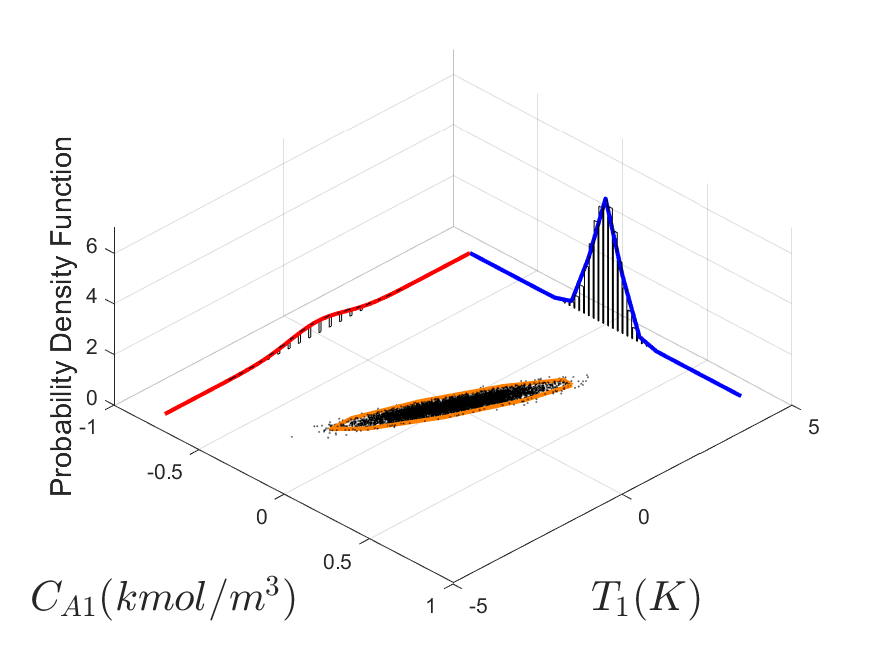} 
	\caption{Stationary distributions of $C_{A1}$ and $T_{A1}$ in tank $1$ without attacks. }
	\label{fig:single_steady_dist_tank1_noattack}
    \end{subfigure}
    	\hfill
    \begin{subfigure}{0.45\textwidth}
    	\centering
    	\includegraphics[width=\textwidth]{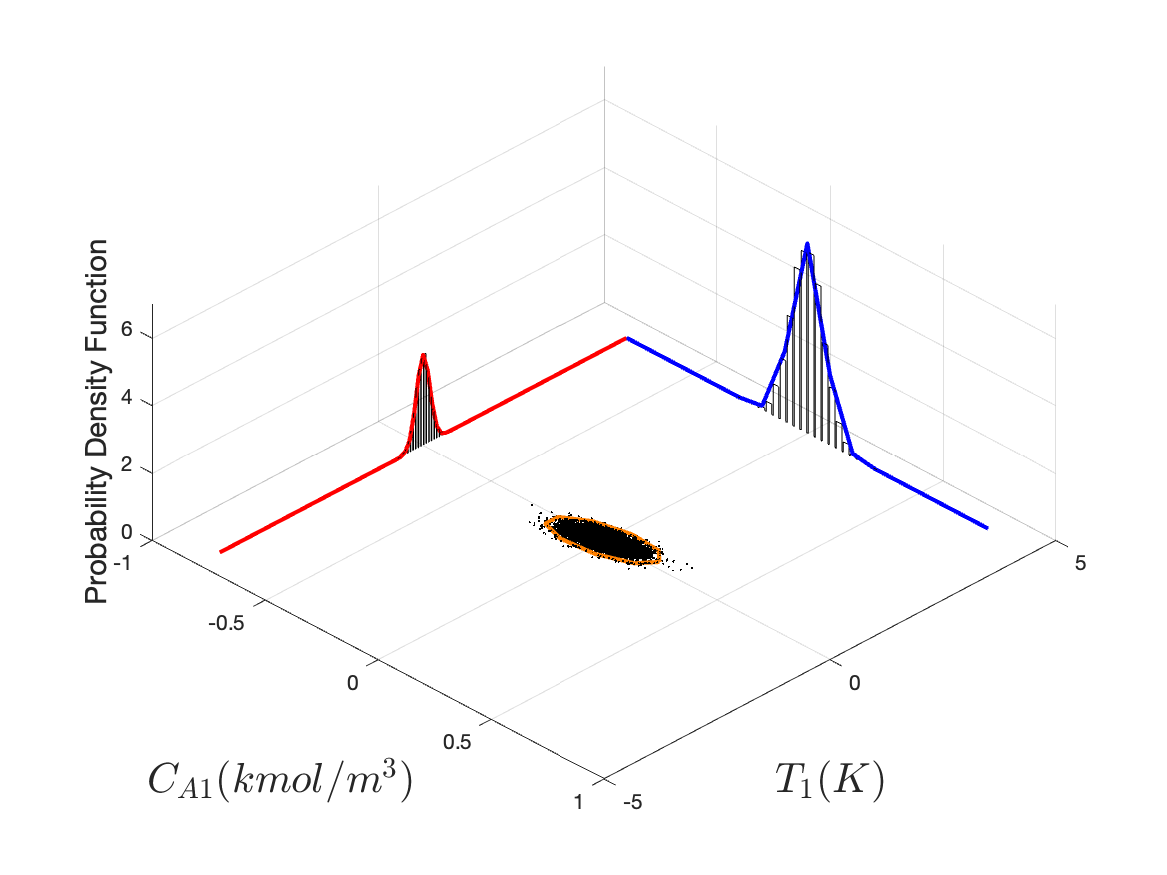} 
    	\caption{Stationary distributions of $C_{A2}$ and $T_{A2}$ in tank $2$ with single measurement attacks when $\lambda = 8$.}
    	\label{fig:single_steady_dist_tank2_attack}
    \end{subfigure}
        \hfill
    	\begin{subfigure}{0.45\textwidth}
    	\centering
    	\includegraphics[width=\textwidth]{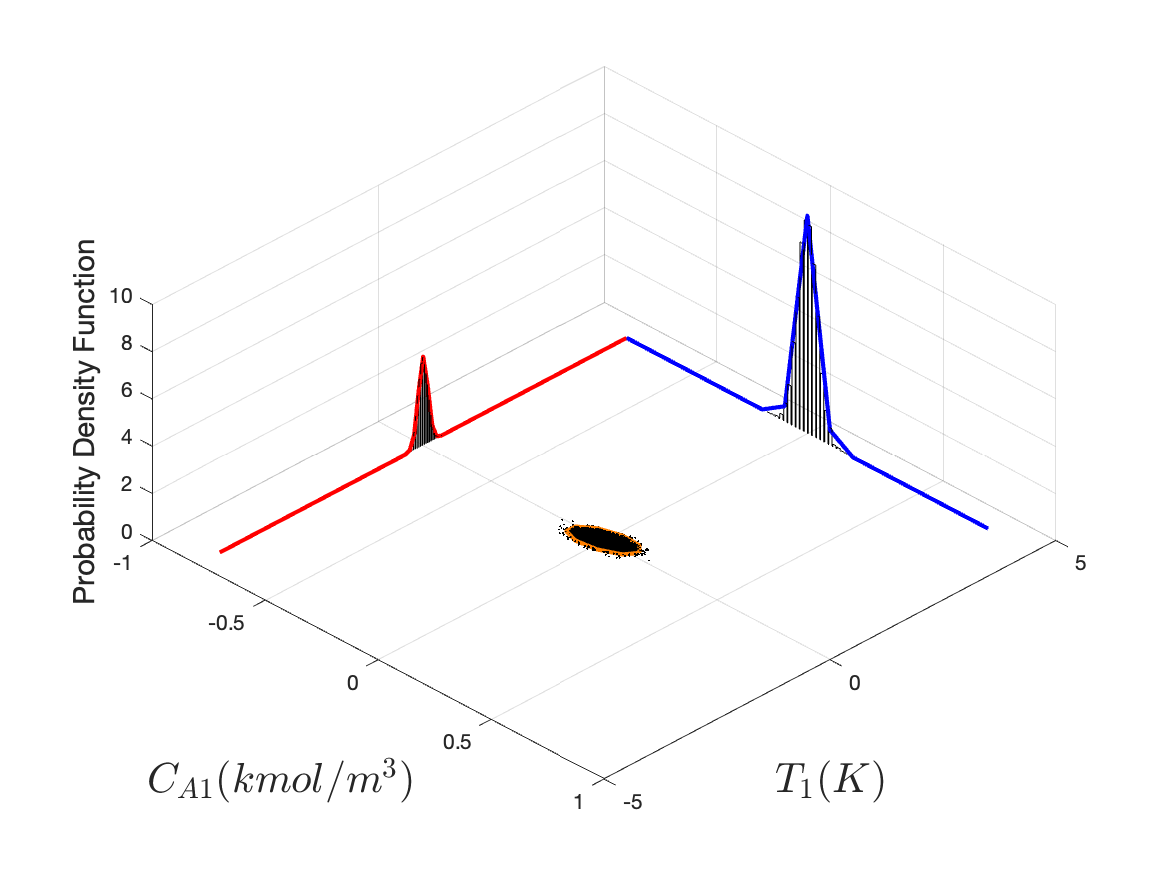} 
    	\caption{Stationary distributions of $C_{A2}$ and $T_{A2}$ in tank $2$ without attacks. }
    	\label{fig:single_steady_dist_tank2_noattack}
    \end{subfigure}
	\caption{Analysis of single measurement attacks in terms of stationary distributions.}
	\label{fig:single_attack_steady_dist}
\end{figure}

\subsection{Performance of $k$-sparse Attacks}
{In the sequential process of constructing $k$-sparse attacks, the procedure of attacking one measurement at each step is consistent with the analysis of single measurement attacks in the previous section. We still obtain the same effect of $\lambda$ on the deviations of stationary distributions and the probability of detection.}
In this section, we evaluate the performance of $k$-sparse attacks with a focus on assessing the effect of the sparsity constraint $k$. Intuitively, a large $k$ implies a more powerful attacker.

Fig.~\ref{fig:k_performance} depicts the performance of the $k$-sparse attacks for different $k$. 
Fig.~\ref{k_paretocurve} shows the performance of the $k$-sparse attack construction in Algorithm~\ref{alg:greedy} in terms of the tradeoff between the {KL divergence of states' and state estimates' distributions and KL divergence of the measurements}, i.e., $D(P_{\xiv_{\av}} \|P_{\xiv})$ and $D(P_{\yv_a}\|P_{\yv})$. Note that in $k$-sparse attack construction, when $k>1$, the Pareto curve is obtained by considering the feasible $\lambda$ for all measurement $j = 1,2,3,4$.
In general, larger $k$ yields a lower Pareto curve, as the ability to attack more measurements yields better performance in attack constructions overall. 
As expected, larger values of the parameter $\lambda$ yield smaller values of KL divergence $D(P_{\xiv_{\av}} \|P_{\xiv})$, i.e., the probability of detection is prioritized in the construction over the disruption for $k = 1, 2, 3, 4$. Moreover, with the same $D(P_{\xiv_{\av}} \|P_{\xiv})$, smaller values of $k$ yield smaller disruption captured by $D(P_{\xiv_{\av}} \|P_{\xiv})$. This coincides with the intuition that with the same probability of detection, the ability to compromise more measurements facilitates the disruption on the stationary distribution. 
 \begin{figure}[ht]
	\centering
	\begin{subfigure}{0.45\textwidth}
		\centering
		\includegraphics[width=\textwidth]{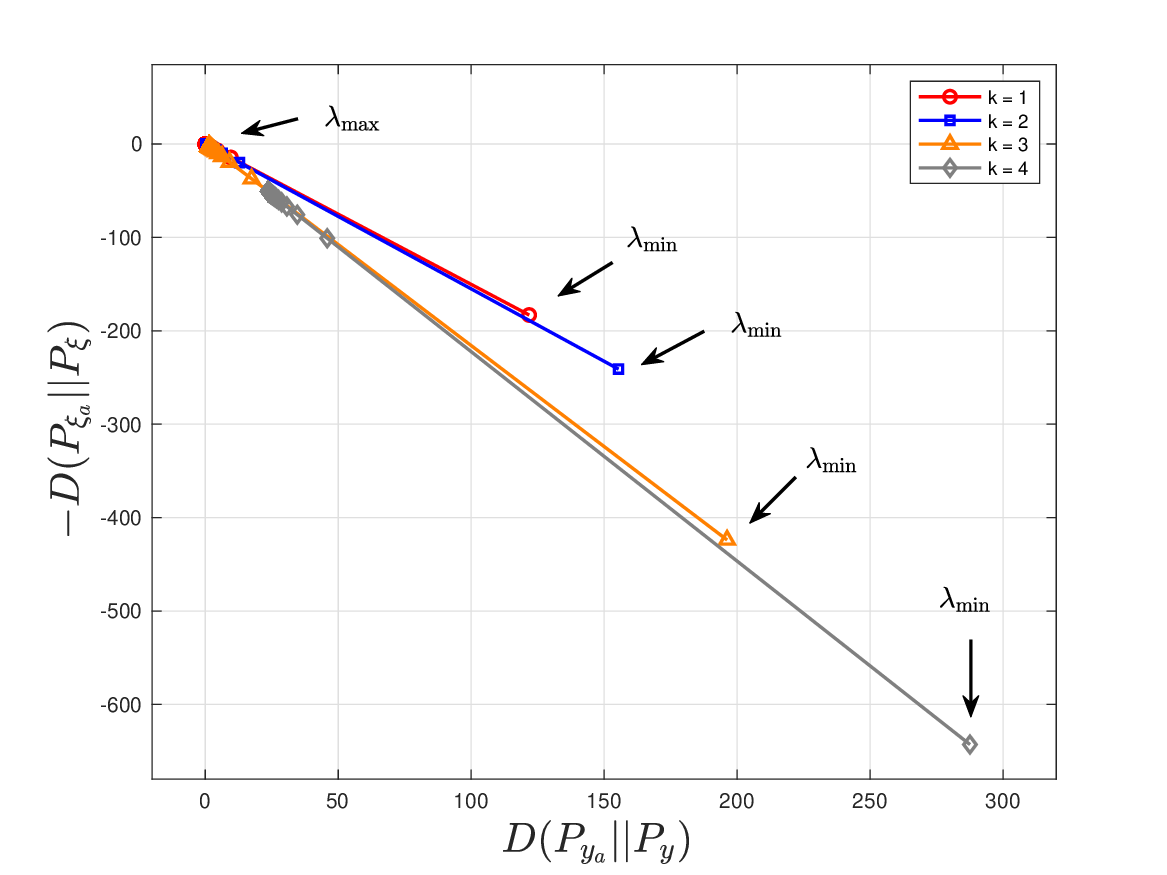} 
		\caption{Performance of the $k$-sparsity attacks in terms of $D(P_{ \xiv_a} \| P_{ \xiv} )$ and $D(P_{ \yv_a} \| P_{ \yv} )$ for different $k$ in the greedy algorithm.}
		\label{k_paretocurve}
	\end{subfigure}
	\begin{subfigure}{0.45\textwidth}
		\centering
		\includegraphics[width=\textwidth]{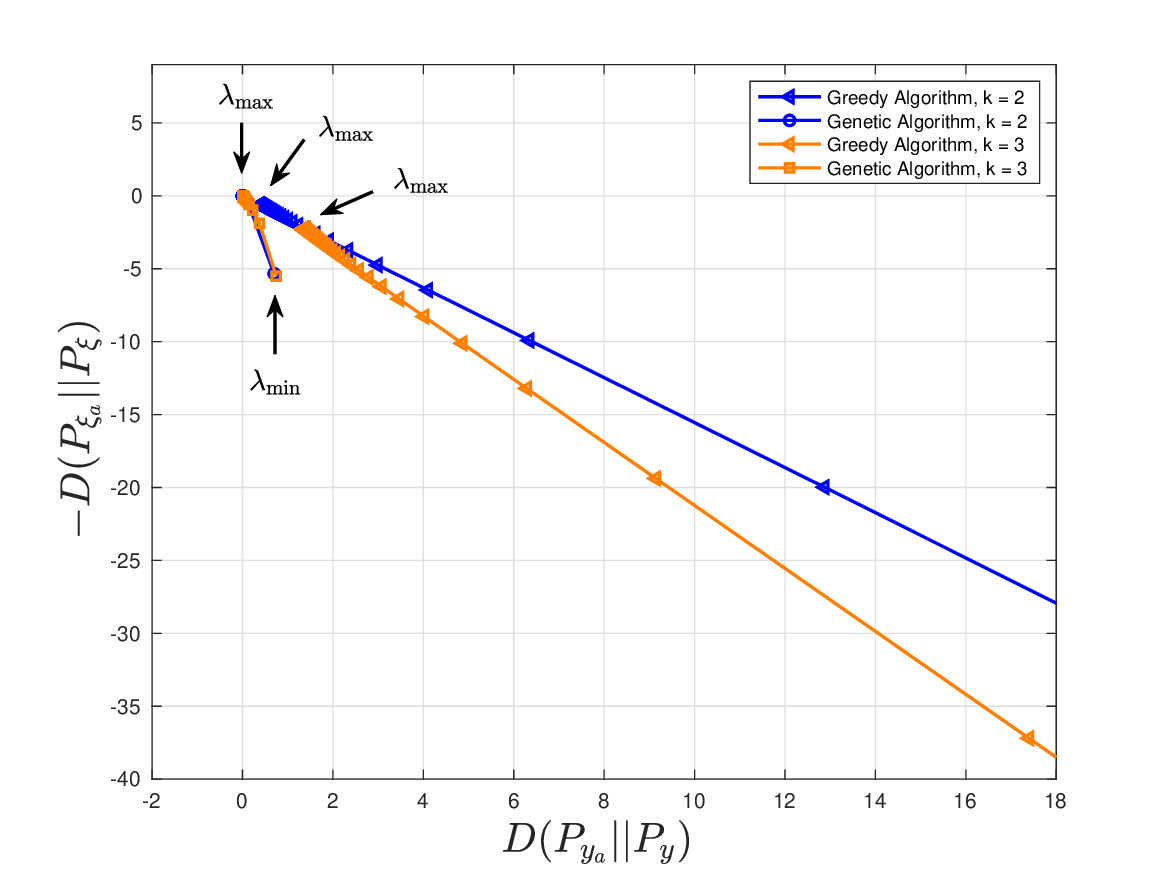} 
		\caption{Performance of the $k$-sparsity attacks in terms of $D(P_{ \xiv_a} \| P_{ \xiv} )$ and $D(P_{ \yv_a} \| P_{ \yv} )$ for different $k$ in the greedy algorithm and genetic algorithm when $k = 2$ and $k = 3$.}
		\label{k_paretocurve_two_algo}
	\end{subfigure}
	\begin{subfigure}{0.45\textwidth}
		\centering
		\includegraphics[width=\textwidth]{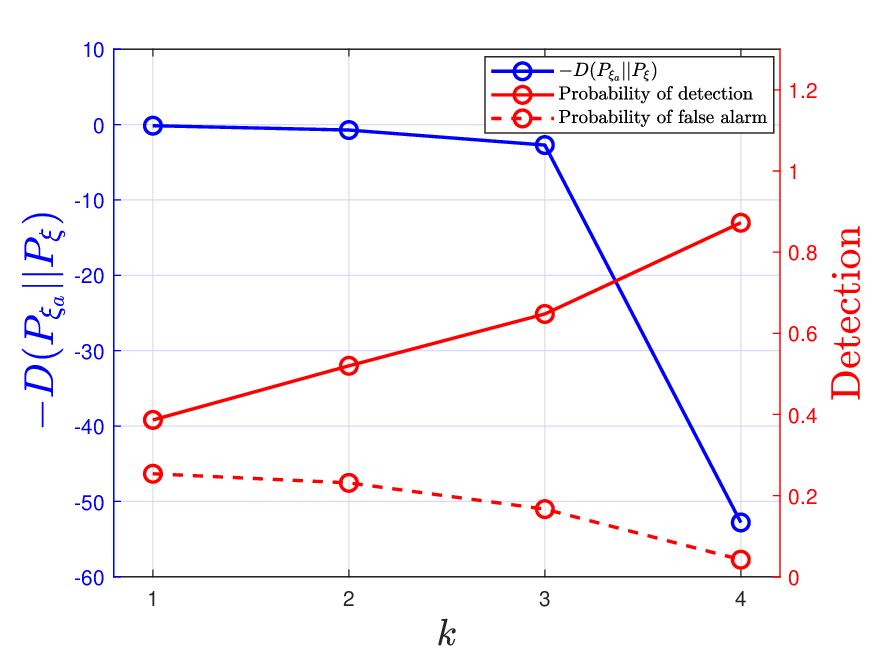} 
		\caption{Performance of the $k$-sparsity attacks in terms of $D(P_{ \xiv_a} \| P_{ \xiv} )$ and probability of detection for different $k$ in the greedy algorithm.}
		\label{fig:k_D_and_Pd}
	\end{subfigure}
	\caption{Analysis of $k$-sparse attacks.}
	\label{fig:k_performance}
\end{figure}

{To assess the performance of the proposed attack constructions, we adopt a genetic algorithm (GA) and present the results in Fig.~\ref{k_paretocurve_two_algo} for $k = 2, 3$. In the genetic algorithm, we initialize with all possible individuals and apply the optimal attack variances to the measurements as in eq.~\eqref{v_star}, thereby avoiding crossover and mutation in the domain of the variances that are continuously valued. 
Overall, the genetic algorithm yields small values of $D(P_{\yv_a} \| P_{\yv})$, which is favorable, as they indicate low probabilities of detection. However, the attack disruption, captured by $D(P_{\xiv_a} \|P_{\xiv})$, is also small. This is because the attack construction in the genetic algorithm is constrained to a much smaller feasible range, preventing the attacker from designing attacks that focus more on disruption. 
In addition to this limitation, the genetic algorithm generally suffers from a significant increase in the number of possible individuals as $m$ increases, along with the need for tuning multiple parameters. These challenges make it difficult for the algorithm to converge efficiently, rendering it impractical for real-world applications. Instead, it serves as a benchmark to demonstrate that the proposed greedy algorithm performs comparatively well, with lower computational complexity and a much larger feasible range of $\lambda$. From the attacker's perspective, the proposed approach is more practical, as it is implemented sequentially. }

In Fig.~\ref{fig:k_D_and_Pd}, the performance of the $k$-sparse attack in Algorithm~\ref{alg:greedy} is evaluated in terms of the disruption $D(P_{\xiv_{\av}} \|P_{\xiv})$ and the probability of detection for different $k$. 
As expected, larger values of the parameter
$\lambda$ yield smaller probability of detection while increasing the KL divergence between the stationary distributions under attacks and without attacks. We note that the probability of attack detection increases approximately linearly with respect
to $k$. Simultaneously in this range of $k$, $-D(P_{\xiv_{\av}} \|P_{\xiv})$ decreases approximately linear with respect to $k$. It is worth noting that the probability of false alarm maintains a relatively stable value for all $k$.

Similarly as in the case of single measurement attacks, we illustrate the stationary distributions of the variables in tank $1$ and tank $2$, respectively. Fig.~\ref{fig:k_steady_dist_tank1_attack} and Fig.~\ref{fig:k_steady_dist_tank1_noattack} depict the stationary distributions with attacks in comparison to the distributions without attacks in tank $1$. Fig.~\ref{fig:k_steady_dist_tank2_attack} and Fig.~\ref{fig:k_steady_dist_tank2_noattack} depict the stationary distributions with attacks in comparison to the distributions without attacks in tank $2$. As shown in the figures, in both tanks, all the distributions of variables with attacks differ significantly from the their distributions without attacks.

\begin{figure}[ht]
	\centering
		\begin{subfigure}{0.45\textwidth}
		\centering
    \includegraphics[width=\textwidth]{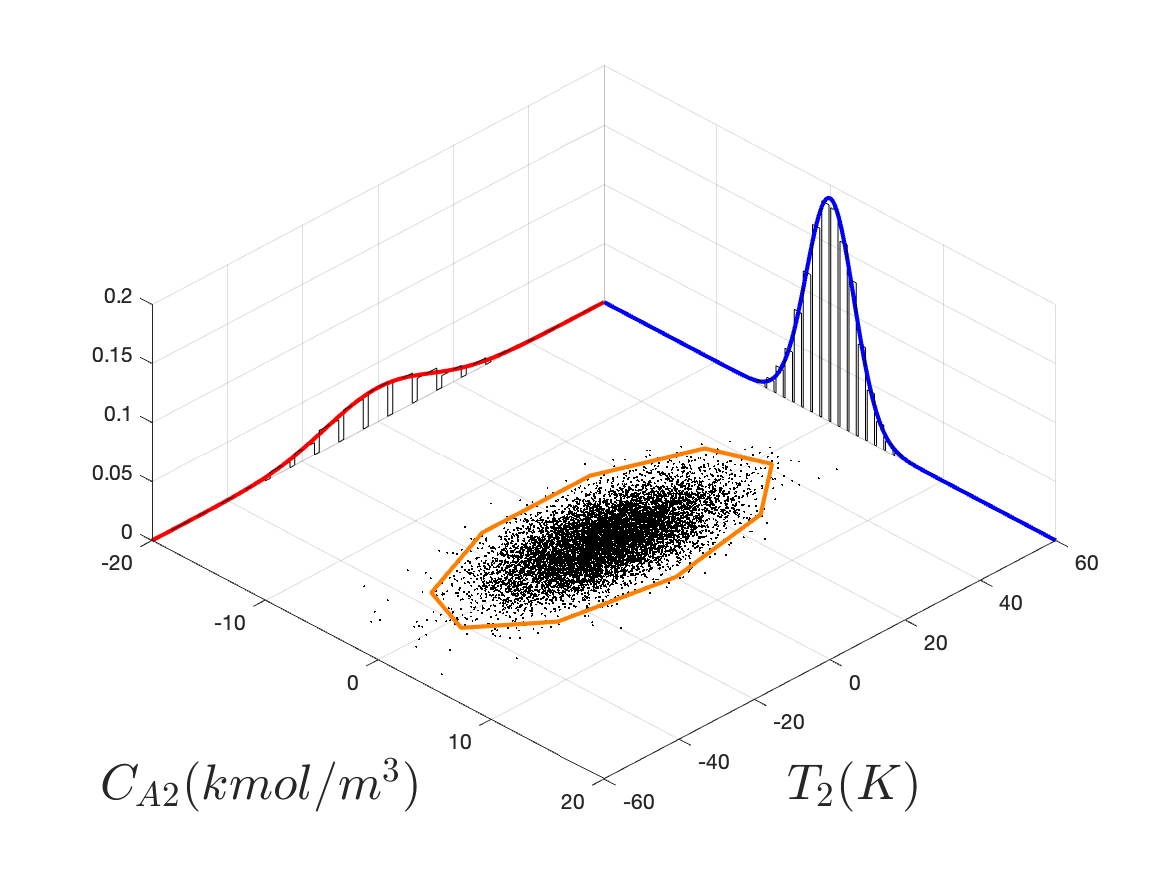} 
		\caption{Stationary distributions of $C_{A1}$ and $T_{A1}$ in tank $1$ of $k$-sparse attacks with $k=4$ and $\lambda = 8$.}
		\label{fig:k_steady_dist_tank1_attack}
	\end{subfigure}
	\begin{subfigure}{0.45\textwidth}
		\centering
		\includegraphics[width=\textwidth]{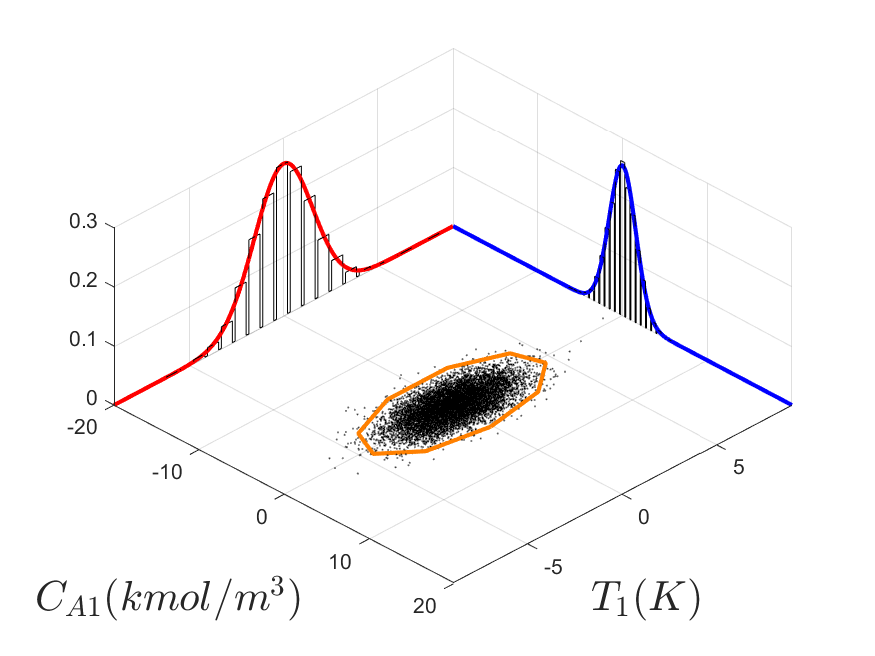} 
		\caption{Stationary distributions of $C_{A1}$ and $T_{A1}$ in tank $1$ without attacks. }
		\label{fig:k_steady_dist_tank1_noattack}
	\end{subfigure}
	\centering
	\begin{subfigure}{0.45\textwidth}
		\centering
		\includegraphics[width=\textwidth]{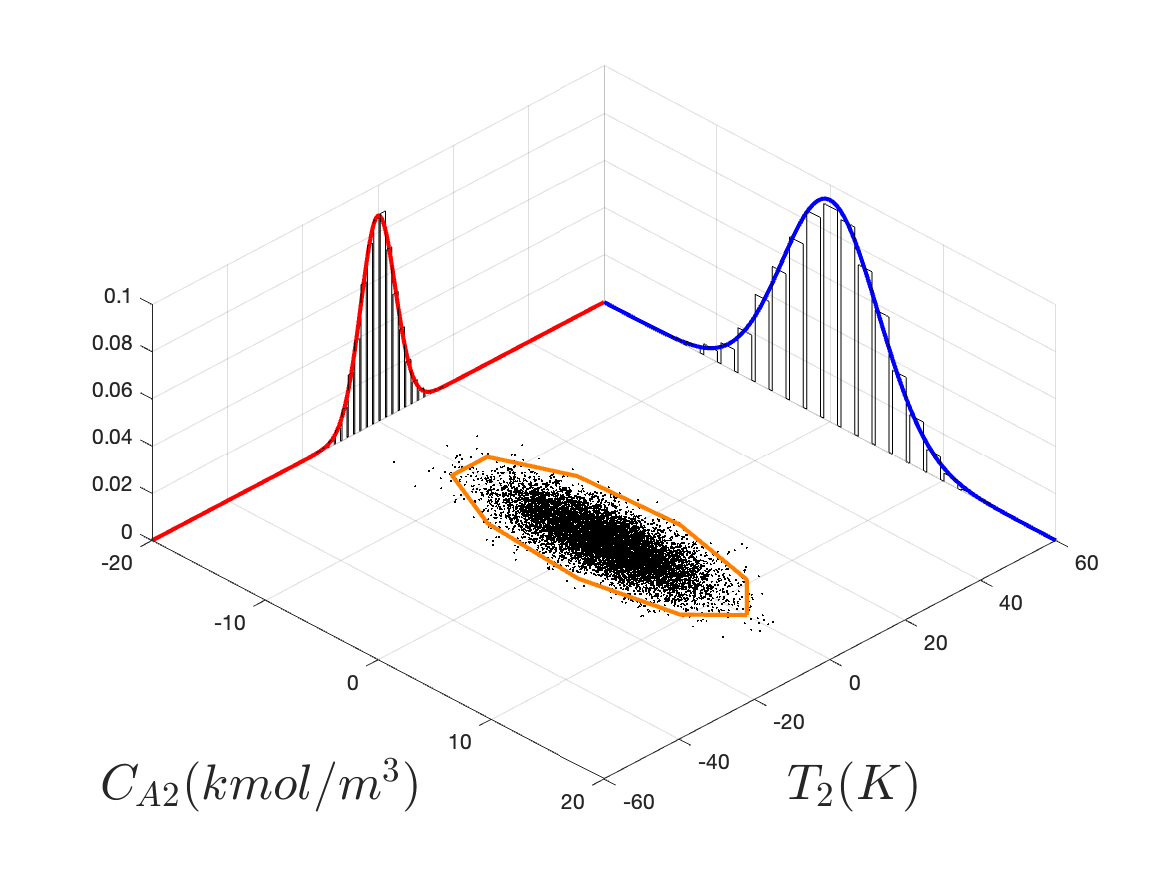} 
		\caption{Stationary distributions of $C_{A2}$ and $T_{A2}$ in tank $2$ of $k$-sparse attacks with $k=4$ and $\lambda = 8$.}
		\label{fig:k_steady_dist_tank2_attack}
	\end{subfigure}
	\begin{subfigure}{0.45\textwidth}
		\centering
		\includegraphics[width=\textwidth]{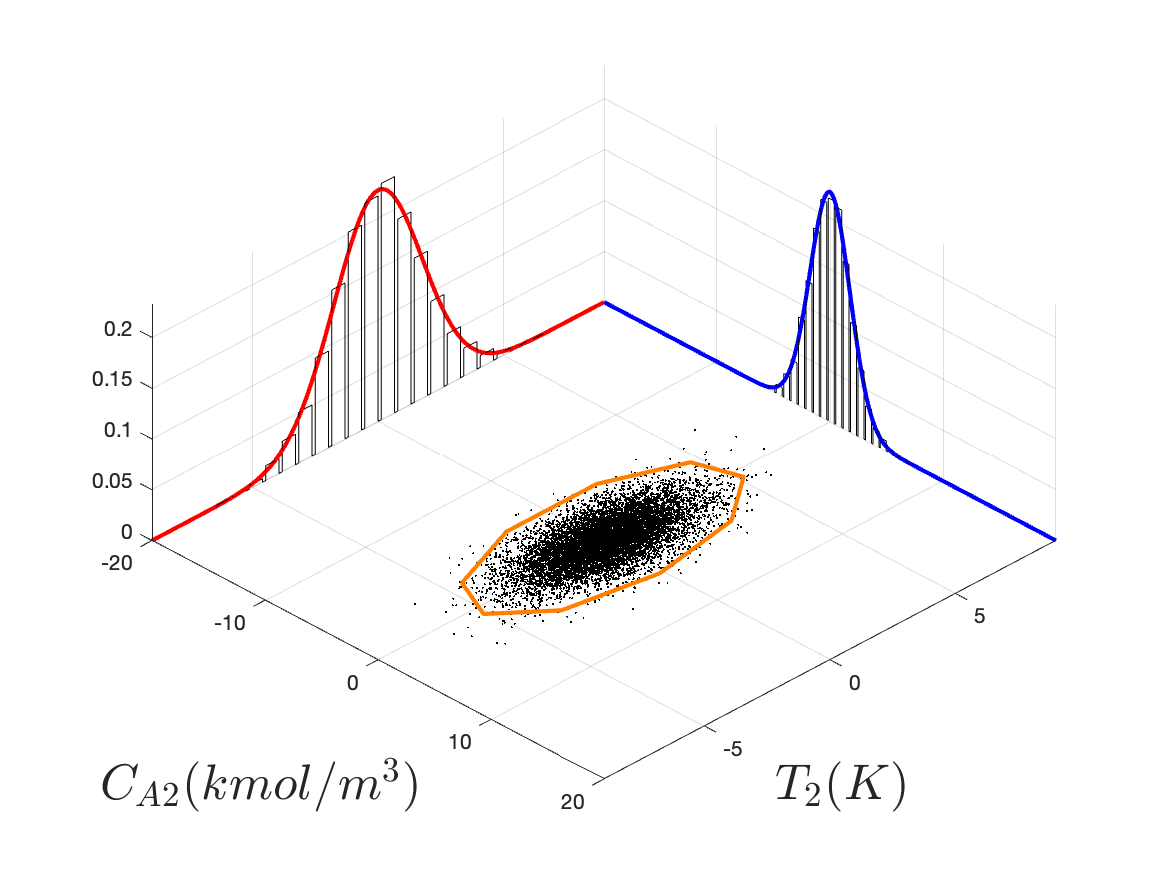} 
		\caption{Steady state distributions of $C_{A2}$ and $T_{A2}$ in tank $2$ without attacks. }
		\label{fig:k_steady_dist_tank2_noattack}
	\end{subfigure}
		\caption{Analysis of $k$-sparse attack in terms of stationary distributions.}
	\label{fig:k_attack_steady_dist}
\end{figure}

\subsection{Discussion on Vulnerability and Resilience}
In single measurement attacks, it is interesting to observe that in Fig.~\ref{fig:single_paretocurve}, attacking different measurements resulting in Pareto curves that do not intersect with each other. This observation leads to the insight that {\emph{some measurements are more vulnerable to DIAs}}. Particularly, in Fig.~\ref{fig:single_paretocurve}, the Pareto curves of attacking measurement $1$ and $2$, namely $C_{A1}$ and $T_1$, are lower than the ones corresponding to attacking measurements $3$ and $4$, namely $C_{A2}$ and $T_2$. Actually, the Pareto curve for attacking measurement $1$ is nearly identical to the one for attacking measurement $2$. Likewise, the Pareto curve for attacking measurement $3$ is nearly identical to the one for attacking measurement $4$. This is intuitively expected as the measurements within the same tank suffer from DIAs to a extent. 

More importantly, we observed the measurements within tank $1$ obtain lower Pareto curves in comparison with the ones within tank $2$.
Therefore, we conclude in this chemical process, tank $1$ is more vulnerable to DIAs. In Fig.~\ref{fig:single_D1_Prob}, the KL divergence $D(P_{ \xiv_a} \| P_{ \xiv} )$ for attacking measurement $1$ is nearly identical to the one for attacking measurement $2$. A similar pattern is observed for measurement $3$ and $4$. The same behavior exists in the probability of detection. Attacking measurement $1$ and $2$ yields higher probability of detection as attacking measurement $3$ and $4$. This coincides with the conclusion from Fig.~\ref{fig:single_paretocurve} that measurement $1$ and $2$ are more vulnerable. The curve of $D(P_{ \xiv_a} \| P_{ \xiv} )$ in Fig.~\ref{fig:single_D1_Prob} also suggests that attacking measurement $1$ and $2$ yields larger deviation of the distributions of the states and their estimates. Hence, we conclude with tank $1$ is more vulnerable to DIAs from both metrics. 

{Hence, we conclude that tank $1$ is more vulnerable to DIAs. This analysis also highlights a key aspect of process system resilience. That is, the resilience to maintain safe and stable operation in the presence of DIAs. In the context of chemical processes, resilience not only involves recovering from faults or attacks but also minimizing the deviation of critical variables such as concentrations, flow rates, or temperatures from their nominal operating ranges. The fact that attacks on tank $1$ lead to larger deviations in stationary distributions and higher probabilities of detection suggests that as part of the process, tank $1$ is more sensitive to DIAs and has a lower resilience margin compared to tank $2$, making it a critical subsystem that could compromise the overall safety and performance of the chemical process under attack. In practical terms, this insight inspires enhanced monitoring, stronger anomaly detection, or the inclusion of redundant sensing and control strategies for tank $1$. By quantifying this localized vulnerability, the analysis provides a pathway toward targeted resilience enhancement in chemical process systems.}

\section{Conclusions}\label{sec_conclusion}
\par {This paper proposes a systematic information-theoretic framework to assess and quantify the process resilience of control systems under optimal DIAs. The resilience is assessed by the attack disruption and detection by the proposed optimal DIAs, where the attack construction is cast as an optimization problem.} An analytical solution is obtained for single measurement attacks. For the proposed $k$-sparse attack where the attacker is able to attack $k$ measurements, a heuristic greedy algorithm is proposed, where one-at-a-time measurement is attacked in a sequential process. The greedy step results in a convex optimization problem under appropriate conditions, which can be solved efficiently and yields a low-complexity attack
updating rule. We numerically evaluate the attack
performance in a two-reactor chemical process, and
we concluded that some measurements in the process are more vulnerable to DIAs than others in the sense of a worse disruption-stealth tradeoff.
 
In this study, we adopted a conservative approach by assuming the attacker has complete knowledge of the system model and the second-order moments of the variables in the process. Additionally, the attack construction was restricted to a linear system. {While this work incorporates a sparsity constraint as well as numerical studies where the attacker is with limited knowledge, further exploration remains open to address scenarios analytically and to extend the methodology to nonlinear systems.}


\section*{Acknowledgments}
\noindent The authors express their gratitude to the Department of Chemical and Biomolecular Engineering at NC State University for providing the startup funding and to Prof. Aritra Mitra in the Department of Electrical and Computer Engineering at NC State University for insightful discussions and for directing us to valuable literature on control-theoretic approaches to cyberattacks.
\section*{Author Contributions}
\noindent \textbf{Xiuzhen Ye}: Formal analysis; investigation; methodology; software; writing -- original draft, writing -- editing. \\
\noindent \textbf{Wentao Tang}: Conceptualization; formal analysis; funding acquisition; investigation; methodology; project administration; writing -- review. 
\section*{Data Availability Statement}
\noindent All the data to create the figures are available in the supplementary materials. These figures are created from computer simulations. 
All the Matlab codes needed to reproduce the simulations are available at GitHub repository: \url{https://github.com/WentaoTang-Pack/DIAonProcessSystems/}. All the figures are provided in the GitHub repository.

\section*{ORCID}
\noindent \textit{Xiuzhen Ye}\quad \url{https://orcid.org/0000-0002-7859-713X} \\
\noindent \textit{Wentao Tang}\quad \url{https://orcid.org/0000-0003-0816-2322} 

\let\OLDthebibliography\thebibliography
\renewcommand\thebibliography[1]{
  \OLDthebibliography{#1}
  \setlength{\itemsep}{0pt plus 0.3ex}
}
\bibliography{bib}
\end{document}